\documentclass[11pt]{article}

\usepackage{appendix}
\usepackage{amsmath}
\usepackage{amssymb}
\usepackage{amsthm}
\usepackage{multirow}
\usepackage{color}
\usepackage{longtable}
\usepackage{array}
\usepackage{url}
\usepackage{booktabs}
\usepackage{float}
\usepackage{mathtools}
\usepackage{tikz}

\oddsidemargin  0pt \evensidemargin 0pt \marginparwidth 40pt
\marginparsep 10pt \topmargin -10pt \headsep 10pt \textheight
8.7in \textwidth 6.7in \textheight 7.8375in

\newtheorem{theorem}{Theorem}[section]
\newtheorem{definition}[theorem]{Definition}
\newtheorem{lemma}[theorem]{Lemma}
\newtheorem{example}[theorem]{Example}

\newtheorem{corollary}[theorem]{Corollary}

\newtheorem{remark}[theorem]{Remark}



\title{New Theoretical Bounds and Constructions of \\ Permutation Codes under Block Permutation Metric}

\author{Zixiang Xu$^{\text{a,}}$\thanks{Email address: zxxu8023@qq.com.},~~Yiwei Zhang$^{\text{b,}}$\thanks{Supported in part by a Technion fellowship. Email address: ywzhang@cs.technion.ac.il.}~~and Gennian Ge$^{\text{a,}}$\thanks{Corresponding author. Email address: gnge@zju.edu.cn. Research supported by the National Natural Science Foundation of China under Grant Nos. 11431003 and 61571310, Beijing Scholars Program, Beijing Hundreds of Leading Talents Training Project of Science and Technology, and Beijing Municipal Natural Science Foundation.}\\
\footnotesize $^{\text{a}}$ School of Mathematical Sciences, Capital Normal University, Beijing, 100048, China\\
\footnotesize $^{\text{b}}$ Department of Computer Science, Technion -- Israel Institute of Technology, Haifa 32000, Israel.\\}

\begin{document}

\date{}

\maketitle

\begin{abstract}
Permutation codes under different metrics have been extensively studied due to their potentials in various applications. Generalized Cayley metric is
introduced to correct generalized transposition errors, including previously studied metrics such as Kendall's $\tau$-metric, Ulam metric and Cayley
metric as special cases. Since the generalized Cayley distance between two permutations is not easily computable, Yang et al. introduced a related metric of the same order,
named the block permutation metric. Given positive integers $n$ and $d$, let $\mathcal{C}_{B}(n,d)$ denote the maximum size of a permutation code in $S_n$ with
minimum block permutation distance $d$. In this paper, we focus on the theoretical bounds of $\mathcal{C}_{B}(n,d)$ and the constructions of  permutation codes under block permutation metric.
Using a graph theoretic approach, we improve the Gilbert-Varshamov type bound by a factor of $\Omega(\log{n})$, when $d$ is fixed and $n$ goes into infinity.
We also propose a new encoding scheme based on binary constant weight codes. Moreover, an upper bound beating the sphere-packing type bound is given when $d$
is relatively close to $n$.
\medskip

\noindent {{\it Key words and phrases\/}: Permutation codes, block permutation metric, Gilbert-Varshamov bound, sphere-packing bound, independence
number.}

\smallskip

\noindent {{\it AMS subject classifications\/}: 94B25, 94B65.}

\smallskip
\end{abstract}

\section{Introduction}

Let $S_n$ be the symmetric group on $n$ elements. A permutation code is a subset of $S_n$ with some certain constraints. Permutation codes under several
different metrics are widely used due to their various applications. Especially in recent years, permutation codes under Kendall's $\tau$-metric, Ulam
metric and Cayley metric have been extensively studied in clouding storage systems, genome resequencing and the rank modulation scheme of flash memories
\cite{2015BuzagloKendall},\cite{2016BuzagloMulti},\cite{2013FarnoudUlam},\cite{Gologlu2015New},\cite{2014Multipermutation},\cite{1990KendallRankMethod},\cite{2016WangPerms},\cite{2016ZhangSnake}.
Under these metrics, codes are designed to correct transposition errors or translocation errors. In \cite{2014CheeBreakpoint}, Chee and Vu introduced the
generalized Cayley metric which includes the metrics aforementioned as special cases. However, the generalized Cayley distance between two permutations
is in general not easily computable and thus the construction of codes is difficult. In \cite{Yang2018Theoretical}, Yang et al. introduced the block
permutation metric which could be simply computed and is of the same order as the generalized Cayley metric. By the metric embedding method, the problem
of constructing codes in the generalized Cayley metric is transformed into constructing codes in the block permutation metric. Several theoretical
bounds (Gilbert-Varshamov type and sphere-packing type) and constructions of codes under block permutation metric are shown in \cite{Yang2018Theoretical}.

In this paper we further consider permutation codes in $S_n$ under the block permutation metric. We first establish a connection between permutation
codes and independent sets in a corresponding graph and then study the bounds of the independence number of the graph. By this graph theoretic approach,
we improve the Gilbert-Varshamov type bound asymptotically by a factor of $\Omega(\log{n})$, when the minimum distance $d$ is fixed while $n$ goes into
infinity. We also propose a new encoding scheme based on certain constructions of binary constant weight codes. Compared with the known constructions,
we improve the size of codes by a factor of $\Theta(n^{2d-4})$. As for the upper bound, each permutation can be represented as a corresponding characteristic set
and then we apply some methods from extremal set theory to obtain an upper bound of a new type, which beats the sphere-packing type bound when $d$ is relatively close to $n$.

The rest of this paper is organized as follows. In Section \ref{sec:pre}, we review some basic backgrounds about block permutation metric. In Section
\ref{sec:graph}, we introduce some relevant terminologies and results from extremal graph theory and then establish the correspondence between
permutation codes and independent sets in some certain graph. The asymptotic improvement of the Gilbert-Varshamov type bound is presented in Section \ref{sec:TheoBound}.
Section \ref{sec:encoding} contains a new encoding scheme based on binary constant weight codes. The upper bound based on extremal set theory is presented in Section \ref{sec:UpperBound}.
We conclude in Section \ref{sec:conclusion}.

\section{Block permutation metric}\label{sec:pre}

In this section, we give some definitions and notations for permutation codes under block permutation metric.

Let $[n]$ denote $\{1,2,3,\ldots,n\}$. $\pi=(\pi(1),\pi(2),\ldots,\pi(n))$ is a permutation over $[n]$, known as the vector notation of a permutation.
The symbol $\circ$ denotes the composition of permutations. Specifically, for two permutations $\sigma$ and $\tau$, their composition, denoted by
$\sigma \circ \tau$, is the permutation with $\sigma \circ \tau(i)=\sigma(\tau(i))$ for all $i \in [n]$. All the permutations under this operation form
the noncommutative group $S_n$ known as the symmetric group on $[n]$ of size $\left|S_n\right|=n!$. The subsequence of $\sigma$ from indices $i$ to
$j$ is written as $\sigma \left[i;j\right] \triangleq \left(\sigma(i),\sigma(i+1),\ldots,\sigma(j)\right)$.

\begin{definition}
A permutation $\pi \in S_n$ is called {\it minimal} if and only if no consecutive elements in $\pi$ are also consecutive in the identity permutation
$e=(1,2,\ldots,n)$, i.e., for all $1 \leqslant i \leqslant n-1, \pi(i+1)\neq \pi(i)+1$. Denote the set of all the minimal permutations in $S_n$ as
$\mathcal{D}_{n}$.
\end{definition}

\begin{definition}\label{def:BPM}
The block permutation distance $d_{B}(\pi_{1},\pi_{2})$ between two permutations $\pi_{1},\pi_{2} \in S_n$ is equal to $d$ if
\begin{equation*}
  \pi_{1}=\left(\psi_{1},\psi_{2},\ldots,\psi_{d+1}\right), \pi_{2}=\left(\psi_{\sigma(1)},\psi_{\sigma(2)},\ldots,\psi_{\sigma(d+1)}\right),
\end{equation*}
where $\sigma\in\mathcal{D}_{d+1}$, $\psi_{k}=\pi_1\left[i_{k-1}+1:i_{k}\right]$ for $0=i_{0}<i_{1}<\cdots<i_{d}<i_{d+1}=n$ and $1\leqslant k\leqslant
d+1.$
\end{definition}

The definition suggests that in order to turn $\pi_{1}$ into $\pi_{2}$, one way is to first divide $\pi_{1}$ into $d+1$ segments
$\pi_{1}=\left(\psi_{1},\psi_{2},\ldots,\psi_{d+1}\right)$ and then perform a block level permutation of these segments according to a permutation
$\sigma\in\mathcal{D}_{d+1}$. The constraint of $\sigma$ being minimal indicates that $d_{B}(\pi_{1},\pi_{2})=d$ if and only if $d+1$ is the minimum
number of segments that $\pi_{1}$ needs to be divided into for such an operation. This definition is somehow not intuitive enough and thus Yang et al.
\cite{Yang2018Theoretical} found another way to characterize the block permutation distance explicitly by the {\it characteristic set} of a permutation.

\begin{definition}\label{def:char}
The characteristic set $A\left(\pi\right)$ for any $\pi \in S_n$ is defined as set of all the consecutive pairs in $\pi$, i.e.,
\begin{equation*}
A\left(\pi\right) \triangleq \left\{\left(\pi\left(i\right),\pi\left(i+1\right)\right)\mid 1\leqslant i < n\right\}.
\end{equation*}
\end{definition}

Note that the characteristic set of a permutation is equivalent to representing a permutation by a directed Hamiltonian path on $n$ vertices. That is,
the Hamiltonian path corresponding to $\pi$ is the set of edges in $\{(x,y)|x,y\in[n],(x,y)\in A(\pi)\}$. The following idea will be frequently used throughout the paper.
Given a subset of $A(\pi)$, the directed edges corresponding to the subset constitute a disjoint union of several directed paths
(an isolated vertex $v$ will be also regarded as a path starting and ending with $v$). Then $\pi$ should be obtained by
concatenating these directed paths into a directed Hamiltonian path.

Let $\mathcal{P}_{n}$ be the set $\{(i,j)|i\neq j, i\in [n], j\in[n]\}$. $\left|\mathcal{P}_{n}\right|=n(n-1)$. For each permutation $\pi \in S_n$, the
corresponding characteristic set $A\left(\pi\right)$ is then a subset of $\mathcal{P}_{n}$ of cardinality $\left|A\left(\pi\right)\right|=n-1$. The
block permutation metric can be characterized by the characteristic set and then some basic properties of the metric can be derived.
These are summarized in the following two lemmas proposed in \cite{Yang2018Theoretical}.

\begin{lemma}\label{lem:db}
For all $\pi_{1},\pi_{2}\in S_n$,
\begin{equation*}
  d_{B}\left(\pi_{1},\pi_{2}\right)=\left|A\left(\pi_{1}\right) \setminus A\left(\pi_{2}\right)\right|.
\end{equation*}
\end{lemma}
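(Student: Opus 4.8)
The plan is to establish the two inequalities $d_{B}(\pi_{1},\pi_{2}) \leqslant |A(\pi_{1})\setminus A(\pi_{2})|$ and $d_{B}(\pi_{1},\pi_{2}) \geqslant |A(\pi_{1})\setminus A(\pi_{2})|$ separately, working throughout with the Hamiltonian-path picture of a permutation described just before the lemma. Write $t=|A(\pi_{1})\setminus A(\pi_{2})|$. The one elementary observation used repeatedly is: if the ordered pairs $(a_{1},a_{2}),(a_{2},a_{3}),\ldots,(a_{k-1},a_{k})$ all lie in $A(\pi)$, then $a_{1},a_{2},\ldots,a_{k}$ appear as a consecutive run, in this order, inside $\pi$ (immediate by induction on $k$).

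For the upper bound, delete from the Hamiltonian path of $\pi_{1}$ the $t$ edges forming $T\triangleq A(\pi_{1})\setminus A(\pi_{2})$. What remains is a disjoint union of $t+1$ directed subpaths; reading them in the order they occur along $\pi_{1}$ yields a segmentation $\pi_{1}=(\psi_{1},\ldots,\psi_{t+1})$ of the form in Definition \ref{def:BPM}. Every internal edge of every $\psi_{k}$ belongs to $A(\pi_{1})\cap A(\pi_{2})$, so by the elementary observation each $\psi_{k}$ occurs as a consecutive run inside $\pi_{2}$; since the $\psi_{k}$ partition $[n]$ and have total length $n$, these runs tile $\pi_{2}$, whence $\pi_{2}=(\psi_{\sigma(1)},\ldots,\psi_{\sigma(t+1)})$ for some $\sigma\in S_{t+1}$. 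To see $\sigma\in\mathcal{D}_{t+1}$, suppose instead $\sigma(i+1)=\sigma(i)+1$ for some $i$, and set $m=\sigma(i)$, so $1\leqslant m\leqslant t$. Then $\psi_{m}$ and $\psi_{m+1}$ are consecutive in $\pi_{2}$, hence the pair consisting of the last entry of $\psi_{m}$ followed by the first entry of $\psi_{m+1}$ lies in $A(\pi_{2})$. But $\psi_{m}$ and $\psi_{m+1}$ are consecutive in $\pi_{1}$ as well, by construction, so that same pair is one of the deleted edges, i.e.\ it lies in $T\subseteq A(\pi_{1})\setminus A(\pi_{2})$ --- a contradiction. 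Thus $(\psi_{k})_{k=1}^{t+1}$ together with $\sigma$ is a valid witness, giving $d_{B}(\pi_{1},\pi_{2})\leqslant t$.

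For the lower bound, let $d=d_{B}(\pi_{1},\pi_{2})$ and fix a decomposition $\pi_{1}=(\psi_{1},\ldots,\psi_{d+1})$, $\pi_{2}=(\psi_{\sigma(1)},\ldots,\psi_{\sigma(d+1)})$ with $\sigma\in\mathcal{D}_{d+1}$, which exists by definition. Each edge of $A(\pi_{1})$ is either internal to some block $\psi_{k}$ or is one of the $d$ junction edges (last entry of $\psi_{k}$, first entry of $\psi_{k+1}$) for $1\leqslant k\leqslant d$. Every internal edge survives in $A(\pi_{2})$ because $\psi_{k}$ occurs intact in $\pi_{2}$; therefore $A(\pi_{1})\setminus A(\pi_{2})$ is contained in the set of $d$ junction edges, so $t\leqslant d$. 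Combined with the previous paragraph this proves $d_{B}(\pi_{1},\pi_{2})=t$.

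The step demanding the most care is the verification that the reordering permutation $\sigma$ produced for the upper bound is minimal: this is precisely where the defining feature of $T$ --- that every edge of $T$ is present in $\pi_{1}$ but absent from $\pi_{2}$ --- gets used, and it is the reason the minimality constraint $\sigma\in\mathcal{D}_{d+1}$ in Definition \ref{def:BPM} makes $d_{B}$ equal to the set-difference count rather than strictly larger. The remaining ingredients (the consecutive-run observation and the split into internal versus junction edges) are routine bookkeeping.
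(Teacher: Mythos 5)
The paper does not actually prove this lemma --- it is quoted from \cite{Yang2018Theoretical} without proof --- so there is no in-paper argument to compare against. Your proof is correct and self-contained, and it uses precisely the path-deletion-and-reassembly picture the paper sets up right after Definition \ref{def:char}: deleting the $t$ edges of $A(\pi_1)\setminus A(\pi_2)$ from the Hamiltonian path of $\pi_1$ produces the block decomposition, the consecutive-run observation shows the blocks tile $\pi_2$, and the contradiction argument for $\sigma\in\mathcal{D}_{t+1}$ is indeed the only nontrivial step. One small addendum worth a sentence: the same contradiction argument, run in the other direction, shows that in \emph{any} witness decomposition with $\sigma$ minimal every junction edge must lie in $A(\pi_1)\setminus A(\pi_2)$, which upgrades your lower bound from $t\leqslant d$ to $t=d$ for every admissible $d$ and thereby confirms that Definition \ref{def:BPM} assigns a unique value (i.e.\ that $d_B$ is well defined). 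As written, your two inequalities already pin down $d_B=t$ under the paper's stated reading of the definition, so this is a refinement rather than a gap.
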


\begin{lemma}\label{lem:pro} For all $\pi_{1},\pi_{2},\pi_{3}\in S_n$, the block permutation distance $d_{B}$ satisfies the following properties:
\begin{enumerate}
  \item (Symmetry) $d_{B}\left(\pi_{1},\pi_{2}\right)=d_{B}\left(\pi_{2},\pi_{1}\right)$.
  \item (Left-invariance) $d_{B}\left(\pi_{3}\circ \pi_{1},\pi_{3} \circ \pi_{2}\right)=d_{B}\left(\pi_{1},\pi_{2}\right)$.
  \item (Triangle Inequality) $d_{B}\left(\pi_{1},\pi_{3}\right)\leqslant d_{B}\left(\pi_{1},\pi_{2}\right)+d_{B}\left(\pi_{2},\pi_{3}\right).$
\end{enumerate}
\end{lemma}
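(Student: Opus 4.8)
The plan is to deduce all three properties from Lemma~\ref{lem:db}, which reexpresses the distance purely in terms of characteristic sets as $d_B(\pi_1,\pi_2)=|A(\pi_1)\setminus A(\pi_2)|$; working directly from Definition~\ref{def:BPM} would be far more painful, whereas once we pass to characteristic sets everything reduces to elementary set theory together with one observation about how composition acts on pairs. The only structural fact we will repeatedly use is that $A(\pi)\subseteq\mathcal{P}_n$ has the fixed size $|A(\pi)|=n-1$ for every $\pi\in S_n$.

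For symmetry, I would simply note that since $|A(\pi_1)|=|A(\pi_2)|=n-1$,
\begin{equation*}
|A(\pi_1)\setminus A(\pi_2)|=n-1-|A(\pi_1)\cap A(\pi_2)|=|A(\pi_2)\setminus A(\pi_1)|,
\end{equation*}
so Lemma~\ref{lem:db} gives $d_B(\pi_1,\pi_2)=d_B(\pi_2,\pi_1)$. For the triangle inequality, I would use the general set inclusion $A(\pi_1)\setminus A(\pi_3)\subseteq\bigl(A(\pi_1)\setminus A(\pi_2)\bigr)\cup\bigl(A(\pi_2)\setminus A(\pi_3)\bigr)$ — an element of the left-hand side lies in $A(\pi_2)\setminus A(\pi_3)$ if it belongs to $A(\pi_2)$, and in $A(\pi_1)\setminus A(\pi_2)$ otherwise — and then take cardinalities, invoking Lemma~\ref{lem:db} on each of the three terms.

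The one step with genuine content is left-invariance, and the key observation is that left composition by $\pi_3$ acts on consecutive pairs as a bijection of $\mathcal{P}_n$. Reading off the vector notation $\pi_3\circ\pi_1=(\pi_3(\pi_1(1)),\ldots,\pi_3(\pi_1(n)))$ shows that $A(\pi_3\circ\pi_1)=\{(\pi_3(a),\pi_3(b)):(a,b)\in A(\pi_1)\}=\phi_{\pi_3}(A(\pi_1))$, where $\phi_{\pi_3}\colon\mathcal{P}_n\to\mathcal{P}_n$ sends $(a,b)$ to $(\pi_3(a),\pi_3(b))$; since $\pi_3$ is a permutation of $[n]$, $\phi_{\pi_3}$ is a bijection, hence commutes with set difference and preserves cardinality, giving
\begin{equation*}
|A(\pi_3\circ\pi_1)\setminus A(\pi_3\circ\pi_2)|=|\phi_{\pi_3}\bigl(A(\pi_1)\setminus A(\pi_2)\bigr)|=|A(\pi_1)\setminus A(\pi_2)|,
\end{equation*}
which equals $d_B(\pi_3\circ\pi_1,\pi_3\circ\pi_2)=d_B(\pi_1,\pi_2)$ by Lemma~\ref{lem:db}. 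I do not expect any real obstacle here; the only point deserving care is the asymmetry of the statement — right composition permutes the positions $i$ rather than the values, so it does not fix the set of consecutive pairs and right-invariance genuinely fails, which is why the lemma is stated only for left composition.
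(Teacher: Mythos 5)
Your proposal is correct. The paper itself gives no proof of this lemma --- it is quoted from Yang et al.\ \cite{Yang2018Theoretical} together with Lemma~\ref{lem:db} --- so there is nothing to compare against, but your derivation of all three properties from the characteristic-set formula $d_B(\pi_1,\pi_2)=|A(\pi_1)\setminus A(\pi_2)|$ is the natural route and each step checks out: symmetry from $|A(\pi_1)|=|A(\pi_2)|=n-1$, the triangle inequality from the inclusion $A(\pi_1)\setminus A(\pi_3)\subseteq\bigl(A(\pi_1)\setminus A(\pi_2)\bigr)\cup\bigl(A(\pi_2)\setminus A(\pi_3)\bigr)$, and left-invariance from the fact that $(a,b)\mapsto(\pi_3(a),\pi_3(b))$ is a bijection of $\mathcal{P}_n$ carrying $A(\pi)$ to $A(\pi_3\circ\pi)$. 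Your closing remark about why right-invariance fails is also accurate.
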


The following example shows how to compute the block permutation distance between two permutations following the terminologies above.

\begin{example}
  Let $\pi_{1} = (4,8,3,2,6,7,5,1,9)$, $\pi_{2}=(6,7,8,3,2,5,1,9,4).$ Their characteristic sets are
  \begin{flalign}
    A(\pi_{1})&=\{(4,8),(8,3),(3,2),(2,6),(6,7),(7,5),(5,1),(1,9)\}, \nonumber\\
    A(\pi_{2})&=\{(6,7),(7,8),(8,3),(3,2),(2,5),(5,1),(1,9),(9,4)\}, \nonumber
  \end{flalign}
  and thus we have $$d_{B}(\pi_{1},\pi_{2})=\left|A\left(\pi_{1}\right) \setminus A\left(\pi_{2}\right)\right|=|\{(4,8),(2,6),(7,5)\}|=3.$$

On the other hand, to compute $d_{B}(\pi_{1},\pi_{2})$ by Definition \ref{def:BPM}, we should find $\psi_{i}, 1\leqslant i \leqslant 4$
and $\sigma\in\mathcal{D}_{4}$ as follows:
$$\psi_{1}=(4), \psi_{2} =(8,3,2), \psi_{3}=(6,7), \psi_{4}=(5,1,9), \sigma=(3,2,4,1).$$
Then we have
\begin{flalign}
  \pi_{1} & =(\psi_{1},\psi_{2},\psi_{3},\psi_{4}), \nonumber \\
  \pi_{2} & =(\psi_{\sigma(1)},\psi_{\sigma(2)},\psi_{\sigma(3)},\psi_{\sigma(4)}), \nonumber
\end{flalign}
and thus $d_{B}(\pi_{1},\pi_{2})=3.$\\
\end{example}
Note that it is usually not easy to find such $\psi_{i}$ and $\sigma$ to compute the block permutation distance between two permutations, while finding the
difference between two characteristic sets is relatively easier. Next we introduce the permutation code under block permutation metric.

\begin{definition}
Given positive integers $n$ and $d$, $\mathcal{C}\subseteq S_n$ is called an $(n,d)$-permutation code under block permutation metric, if
$d_B(\sigma,\pi)\geqslant d$ for any two distinct permutations $\sigma,\pi\in \mathcal{C}$. Let $\mathcal{C}_{B}(n,d)$ denote the maximum size of an $(n,d)$-permutation code
$\mathcal{C}$.
\end{definition}

The best known upper bound and lower bound of $\mathcal{C}_{B}(n,d)$ are proposed in \cite{Yang2018Theoretical}, which are the so-called sphere-packing type bound and Gilbert-Varshamov
type bound. Both bounds are derived from the estimation on the size of a block permutation ball.

\begin{definition}\label{def:ball}
For given integers $n$, $t$ and a given center point $\pi \in S_n$, the $t$-block permutation ball centered at $\pi$ is defined as the set of all permutations $\sigma \in
S_n$, $d_{B}\left(\pi,\sigma \right)\leqslant t$. We denote the $t$-block permutation ball centered at $\pi$ as $b_{B}\left(n,t,\pi\right).$
\end{definition}

Note that by the left-invariance property of $d_{B}$, the size of $b_{B}\left(n,t,\pi\right)$ is independent of the center $\pi$ and thus we can
denote the size of the ball as $|b_{B}\left(n,t\right)|$.

\begin{lemma}\label{lem:ballsize}\textup{\cite{Yang2018Theoretical}}
For given integers $n$ and $t$, $t\leqslant n-\sqrt{n}-1$, denote the size of a $t$-block
permutation ball as $\left |b_{B}\left(n,t\right)\right |$, then we have
\begin{equation*}
  \prod\limits_{i=1}^{t}\left(n-i\right)\leqslant \left |b_{B}\left(n,t\right)\right| \leqslant \prod\limits_{i=0}^{t}\left(n-i\right).
\end{equation*}
\end{lemma}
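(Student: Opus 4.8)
The plan is to compute $\lvert b_B(n,t)\rvert$ essentially exactly and then read off both inequalities. By the left-invariance in Lemma~\ref{lem:pro} we may take the center to be the identity $e=(1,2,\dots,n)$, so that $A(e)=\{(i,i+1):1\le i\le n-1\}$ and, by Lemma~\ref{lem:db}, $d_B(e,\sigma)=j$ exactly when $\sigma$, read as a word, keeps precisely $n-1-j$ of the ``links'' $(i,i+1)$. I would describe every such $\sigma$ by a two-step recipe: cut the word $e$ at a set $S\subseteq[n-1]$ of $j$ positions, obtaining $j+1$ consecutive blocks (intervals) $B_1,\dots,B_{j+1}$ listed left to right, and then rearrange the blocks by some $\tau\in S_{j+1}$. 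The crucial structural observation is that consecutive blocks $B_\ell,B_{\ell+1}$ are adjacent intervals, so $\max B_\ell+1=\min B_{\ell+1}$; hence two blocks re-merge in $\sigma$ (an extra link reappears) if and only if $\tau$ places block $\ell{+}1$ immediately after block $\ell$ for some $\ell$. Therefore the recipe produces a permutation at distance \emph{exactly} $j$ precisely when $\tau\in\mathcal D_{j+1}$, and from $\sigma$ one recovers $S$ (the complement of the maximal increasing-by-one runs of $\sigma$) and then $\tau$ (the order of those runs); this makes the recipe a bijection, so
\[
\lvert b_B(n,t)\rvert=\sum_{j=0}^{t}\binom{n-1}{j}\,\lvert\mathcal D_{j+1}\rvert .
\]

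For the lower bound I would keep only the term $j=t$ and bound $\lvert\mathcal D_{t+1}\rvert$ from below by a union bound over the $t$ forbidden ``succession'' events on a $(t{+}1)$-set, each carrying $t!$ permutations: $\lvert\mathcal D_{t+1}\rvert\ge(t+1)!-t\cdot t!=t!$. This already gives $\lvert b_B(n,t)\rvert\ge\binom{n-1}{t}t!=\prod_{i=1}^{t}(n-i)$.

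For the upper bound I would use the trivial estimate $\lvert\mathcal D_{j+1}\rvert\le(j+1)!$, so that, writing $a_j:=\prod_{i=1}^{j}(n-i)$, one has $\lvert b_B(n,t)\rvert\le\sum_{j=0}^{t}\binom{n-1}{j}(j+1)!=\sum_{j=0}^{t}(j+1)a_j$. This is where the hypothesis $t\le n-\sqrt n-1$ is used: it forces $n-i\ge\sqrt n+1$ for every $i\le t$, hence $a_j\le a_t(\sqrt n+1)^{-(t-j)}$, and reindexing by $r=t-j$ together with $j+1\le t+1$ yields $\sum_{j=0}^{t}(j+1)a_j\le a_t(t+1)\sum_{r\ge0}(\sqrt n+1)^{-r}=a_t(t+1)(\sqrt n+1)/\sqrt n$. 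Since $t+1\le n-\sqrt n$, the right side is at most $a_t(n-\sqrt n)(\sqrt n+1)/\sqrt n=a_t(n-1)<n\,a_t=\prod_{i=0}^{t}(n-i)$, as required.

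The one step that needs genuine care is the counting bijection underlying the displayed formula — in particular the chain of equivalences ``$\tau\in\mathcal D_{j+1}$ $\Longleftrightarrow$ no two blocks re-merge $\Longleftrightarrow$ $d_B(e,\sigma)=j$'', which relies essentially on the blocks being adjacent intervals of $[n]$ and on the recoverability of $(S,\tau)$ from $\sigma$. Everything afterward is elementary binomial bookkeeping plus a geometric-series tail estimate, with the constraint $t\le n-\sqrt n-1$ entering exactly at the point where the factor $t+1$ must be absorbed by that tail.
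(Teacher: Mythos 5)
Your proof is correct. Note that the paper does not actually prove this lemma --- it is imported verbatim from the cited reference --- so there is no internal proof to compare against; judged on its own, your argument is sound and self-contained. The central identity $\lvert b_B(n,t)\rvert=\sum_{j=0}^{t}\binom{n-1}{j}\lvert\mathcal D_{j+1}\rvert$ follows from a genuine bijection: the cut-set $S$ is recovered from $\sigma$ as $\{i:(i,i+1)\notin A(\sigma)\}$, the intra-block links force each block to appear as a contiguous run in $\sigma$, and the adjacency of consecutive blocks makes ``no two blocks re-merge'' exactly equivalent to minimality of $\tau$; this is consistent with the formula $\lvert R(n,k)\rvert=k!\binom{n-1}{k}\sum_{i=0}^{k}(-1)^{k-i}\frac{i+1}{(k-i)!}$ quoted from Myers in Section 4, which is precisely $\binom{n-1}{k}\lvert\mathcal D_{k+1}\rvert$. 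Your lower bound $\lvert\mathcal D_{t+1}\rvert\geqslant(t+1)!-t\cdot t!=t!$ via inclusion of the $t$ succession events is valid (and correctly needs no hypothesis on $t$), and the upper bound's geometric-tail estimate checks out: $(t+1)a_t\frac{\sqrt n+1}{\sqrt n}\leqslant(n-\sqrt n)\frac{\sqrt n+1}{\sqrt n}a_t=(n-1)a_t<\prod_{i=0}^{t}(n-i)$, with the hypothesis $t\leqslant n-\sqrt n-1$ entering exactly where you say it does. The only cosmetic caveat is the degenerate case $j=0$ (one block, $\mathcal D_1$ a singleton), which your formula handles correctly but which is worth stating explicitly.
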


\begin{lemma}\label{lem:GVSP}\textup{\cite{Yang2018Theoretical}}
For given integers $n$ and $t$, let $d=2t+1$, then we can bound $\mathcal{C}_{B}(n,d)$ as
\begin{equation*}
\frac{n!}{\left |b_{B}\left(n,2t\right)\right|} \leqslant \mathcal{C}_{B}(n,d) \leqslant \frac{n!}{\left |b_{B}\left(n,t\right)\right|}.
\end{equation*}
\end{lemma}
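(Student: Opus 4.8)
The plan is to run the two classical volume arguments — sphere-packing for the upper bound and Gilbert--Varshamov for the lower bound — using only that $d_B$ is a metric (Lemma \ref{lem:pro}) and that the ball size $|b_B(n,t)|$ is independent of the center, which follows from left-invariance as noted after Definition \ref{def:ball}.

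For the lower bound, I would take a code $\mathcal{C}\subseteq S_n$ with minimum block permutation distance at least $d$ that is \emph{maximal} with respect to inclusion, i.e.\ no permutation of $S_n\setminus\mathcal{C}$ can be adjoined while keeping the distance property. Maximality forces every $\pi\in S_n$ to satisfy $d_B(\pi,c)\leqslant d-1=2t$ for some $c\in\mathcal{C}$; equivalently, the balls $b_B(n,2t,c)$ with $c\in\mathcal{C}$ cover $S_n$. Counting then gives $n!\leqslant\sum_{c\in\mathcal{C}}|b_B(n,2t,c)|=|\mathcal{C}|\cdot|b_B(n,2t)|$, hence $|\mathcal{C}|\geqslant n!/|b_B(n,2t)|$, and since $\mathcal{C}_B(n,d)$ is the maximum size of such a code, $\mathcal{C}_B(n,d)\geqslant n!/|b_B(n,2t)|$.

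For the upper bound, I would let $\mathcal{C}$ be an arbitrary $(n,d)$-permutation code and show the radius-$t$ balls centered at distinct codewords are pairwise disjoint: if $\sigma\in b_B(n,t,c_1)\cap b_B(n,t,c_2)$ for distinct $c_1,c_2\in\mathcal{C}$, then symmetry and the triangle inequality of Lemma \ref{lem:pro} give $d_B(c_1,c_2)\leqslant d_B(c_1,\sigma)+d_B(\sigma,c_2)\leqslant 2t<d$, contradicting the minimum distance. Since these disjoint balls all lie inside $S_n$, we obtain $|\mathcal{C}|\cdot|b_B(n,t)|=\sum_{c\in\mathcal{C}}|b_B(n,t,c)|\leqslant n!$, so $|\mathcal{C}|\leqslant n!/|b_B(n,t)|$; taking the maximum over all such $\mathcal{C}$ yields $\mathcal{C}_B(n,d)\leqslant n!/|b_B(n,t)|$.

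I do not expect a genuine obstacle here: both halves are the standard metric-space volume bounds, and the only paper-specific ingredients are that $d_B$ is a metric and that $|b_B(n,t,\pi)|$ does not depend on $\pi$, both already established. If one prefers the bounds in fully explicit product form rather than in terms of $|b_B(n,2t)|$ and $|b_B(n,t)|$, one would simply substitute the estimates of Lemma \ref{lem:ballsize}, but that is not needed for the statement as written.
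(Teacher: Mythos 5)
Your proof is correct: both halves are the standard covering/packing volume arguments, using only the metric properties of $d_B$ from Lemma \ref{lem:pro} and the center-independence of $|b_B(n,t,\pi)|$ from left-invariance, exactly as the cited source \cite{Yang2018Theoretical} does (the present paper states this lemma without proof). No gaps.
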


In \cite{Yang2018Theoretical} several constructions of $(n,d)$-permutation codes with $d=2t+1$ were presented, including a code of size
$\frac{n!}{q^{2d-3}}$, where $n(n-1) \leqslant q \leqslant 2n(n-1)$ is a prime number. Moreover \cite{Yang2018Theoretical} contains some explicit systematic constructions and decoding algorithms.

\section{Graph models}\label{sec:graph}

We use the standard terminologies and notations in graph theory. A graph $G$ consists of a set of vertices $V(G)$ and a set of edges $E(G)$.
Each edge is a pair of vertices. Two vertices $u$ and $v$ are called adjacent if there is an edge $\{u,v\}\in E\left(G\right)$.
We say that $H$ is a subgraph of $G$ if $V\left(H\right)\subset V(G)$ and $E(H)\subset E(G)$.
Furthermore if $H$ contains all edges of $G$ joining two vertices in $V(H)$, then $H$ is said
to be the subgraph of $G$ induced by $V(H)$. The neighborhood of a vertex $v$ is the set of all vertices adjacent to $v$, denoted by $\Gamma(v)$.
The neighborhood graph of $v$ is the subgraph induced by $\Gamma(v)$. The size of $|\Gamma(v)|$ is called the degree of the vertex $v$.
Let $\Delta(G)$ denote the maximum vertex degree. An independent set in a graph is a set of vertices where every
pair is nonadjacent. The size of the largest independent set in $G$ is called the independence number, denoted as $\alpha(G)$.

In this section we introduce a natural relationship between codes and independent sets of a corresponding graph. Take the set of all the codewords as the
vertex set of a graph. Two codewords with distance less than $d$ are connected via an edge. Then in an independent set of this graph, every two distinct
codewords have distance no less than $d$. Thus we have a correspondence between an independent set and a code with minimum distance $d$. The problem of
estimating the maximal size of a code turns into analyzing the independence number of the corresponding graph. This well-known approach has already been
shown to be powerful in studying several kinds of codes. Take the permutation code under Hamming metric as an example. Gao et al. \cite{2013GaoImprovementofGV}
improved the Gilbert-Varshamov bound by a factor of $\Omega(\log n)$, when the minimum distance $d$ is fixed and $n$ goes into infinity. Tail et al.
\cite{vardy} improved the Gilbert-Varshamov bound by a factor of $\Omega(n)$, when $\frac{d}{n}$ is fixed and $n$ goes into infinity. Recently, Wang et
al. \cite{2016WangPerms} used a coloring approach to analyze the independence number and improved the Gilbert-Varshamov bound by a factor of $\Omega(n)$
when the minimum distance $d$ is fixed and $n$ goes into infinity.

Here we introduce some results about the independence number of locally sparse graphs. A graph is called triangle-free if and only if the neighborhood
of every vertex is an independent set. 
Ajtai et al. \cite{1980AjtaiNoteRamsey} showed the relationship between triangle-free property and independence number in the following lemma.

\begin{lemma} Let $G$ be a graph with maximum degree $\Delta$. If $G$ is triangle-free, then we have
\begin{equation*}
\alpha(G)\geqslant \frac{|V(G)|}{8\Delta} \log_{2}\Delta.
\end{equation*}
\end{lemma}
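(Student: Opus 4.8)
The statement to be proved is the Ajtai--Koml\'os--Szemer\'edi bound, and the plan is to establish it by a probabilistic deletion argument in which the triangle-free hypothesis is precisely what upgrades the trivial estimate $\alpha(G)\ge |V(G)|/(\Delta+1)$ by a logarithmic factor. We may assume $\Delta\ge 2$, since for $\Delta\le 1$ the right-hand side equals $0$ and there is nothing to prove; write $n=|V(G)|$. First I would fix a parameter $p\in(0,1)$ (it will be of order $1/\Delta$, to be pinned down at the end) and form a random vertex subset $W$ by placing each vertex in $W$ independently with probability $p$. Then $\mathbb{E}\,|W|=pn$, and since $G$ has at most $n\Delta/2$ edges, the expected number of edges induced on $W$ is at most $p^{2}n\Delta/2$. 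From $W$ one must extract an independent set. The crude extraction -- delete one endpoint of each induced edge -- only yields $\alpha(G)=\Omega(n/\Delta)$, with no logarithmic gain, so the real content is to extract from $W$ more economically, and this is where the hypothesis enters.

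The key observation is that for any $v\in W$ the set $N(v)\cap W$ is an independent set of $G$, since an edge inside it would form a triangle with $v$. Hence a vertex $v$ with many neighbours in $W$ is cheap to discard: deleting $v$ lets us keep its entire (independent) $W$-neighbourhood, trading one vertex for $|N(v)\cap W|$ of them. The plan is to process the vertices of $W$ in decreasing order of their current degree inside $W$, at each step deleting the top vertex and permanently crediting to the independent set those of its $W$-neighbours not yet spoken for; analysing this peeling, together with the bound $(1-p)^{d(v)}\ge(1-p)^{\Delta}$ and a convexity (Jensen) estimate applied to $\sum_{v}g\big(\text{$W$-degree of }v\big)$ for a suitable concave $g$, shows that the resulting independent set $I$ satisfies $\mathbb{E}\,|I|\ge c\,pn\log_{2}(1/p)$ for an absolute constant $c$. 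Taking $p$ equal to an appropriate constant times $1/\Delta$ then gives $\mathbb{E}\,|I|\ge \tfrac{1}{8\Delta}\,n\log_{2}\Delta$, and since some outcome of the experiment attains at least the expectation, $G$ has an independent set of this size.

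The hard part is the bookkeeping in the peeling step, because the local ``keep $v$ versus keep $N(v)\cap W$'' decisions are not independent: a neighbour of the current top vertex may itself be the top vertex at a later step, so one cannot simply add up the local gains and must argue that every surviving vertex is credited at most once while the credits still sum to the claimed quantity. An alternative route that sidesteps part of this is an induction on $n$: delete a closed neighbourhood $N[v]$ (using that $N(v)$ is independent of size $d(v)$), apply the inductive bound to the remainder, and track how the maximum degree and the logarithmic factor degrade -- but then one must take care that removing vertices does not shrink $\Delta$ in a way that weakens the hypothesis. Once the extraction estimate $\mathbb{E}\,|I|\ge c\,pn\log_{2}(1/p)$ is in hand, the remaining optimisation of $p$ and the absorption of constants into the factor $8$ are routine.
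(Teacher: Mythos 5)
The paper itself offers no proof of this lemma --- it is quoted from Ajtai, Koml\'os and Szemer\'edi and used as a black box --- so there is no internal argument to compare yours against; your proposal has to stand on its own, and as written it does not. You correctly identify the one place triangle-freeness enters (for $v\in W$ the set $N(v)\cap W$ is independent, so a vertex of high $W$-degree can be traded for its entire $W$-neighbourhood), and the random $p$-subset with $p\asymp 1/\Delta$ is the right scale. But the whole logarithmic gain lives in the single assertion $\mathbb{E}|I|\geqslant c\,pn\log_{2}(1/p)$, and you never establish it: you describe a greedy peeling by current $W$-degree, say that ``analysing this peeling \dots shows'' the bound, and then concede in your final paragraph that the bookkeeping is ``the hard part'' and leave it open. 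That is the theorem, not a step of its proof. Worse, the procedure as described is not sound: two credited vertices $u\in N(v_{1})\cap W$ and $w\in N(v_{2})\cap W$ with $v_{1}\neq v_{2}$ may perfectly well be adjacent --- triangle-freeness only forbids edges \emph{inside a single} $W$-neighbourhood --- so the union of credits need not be independent, and repairing this requires further deletions whose cost is exactly the estimate you have not supplied. The appeal to ``a convexity (Jensen) estimate applied to $\sum_{v}g(\cdot)$ for a suitable concave $g$'' names a technique without specifying $g$ or the inequality it is meant to prove.

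Two further points. First, the target carries an explicit constant $1/8$, and ``absorbing constants'' is not automatic: with $p=c'/\Delta$ you must check $c\,c'\log_{2}(\Delta/c')\geqslant\tfrac{1}{8}\log_{2}\Delta$ for all $\Delta\geqslant 2$, which genuinely constrains $c$ and $c'$; in AKS-type bounds the constant is precisely where the work is. Second, the arguments that actually deliver the $\log\Delta$ factor look different from your peeling: Shearer's proof takes $W$ to be a uniformly random (maximal) independent set, shows $\mathbb{E}\bigl[\Delta\cdot\mathbf{1}[v\in W]+|N(v)\cap W|\bigr]=\Omega(\log\Delta)$ by conditioning on $W$ outside $N[v]$ and using that the trace of $W$ on $N(v)$ ranges over all subsets of an independent set, and then sums over $v$; the alternative is an induction on $n$ removing closed neighbourhoods against a carefully chosen potential function (your own second route, also left unexecuted). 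In short, the skeleton is plausible but the load-bearing estimate is missing, so this is a gap rather than a proof.
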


\par In \cite{1985BollobasRandomgraphs} the lemma above was extended from triangle-free graphs into graphs with relatively few triangles.

\begin{lemma}
Let $G$ be a graph with maximum degree $\Delta$. If $G$ has at most $T$ triangles, then we have
\begin{equation*}
\alpha(G)\geqslant \frac {|V(G)|}{10\Delta}(\log_{2}\Delta- \frac{1}{2}\log_{2}(\frac{T}{|V(G)|})).
\end{equation*}
\end{lemma}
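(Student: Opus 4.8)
The plan is to prove this by a probabilistic sparsification argument that reduces the few-triangles case to the triangle-free case handled by the previous lemma. Write $n=|V(G)|$ and let $p\in(0,1]$ be a parameter to be fixed later. First I would form a random vertex subset $W\subseteq V(G)$ by keeping each vertex independently with probability $p$. If $e_W$ and $t_W$ denote the numbers of edges and triangles of the induced subgraph $G[W]$, linearity of expectation gives $\mathbb{E}|W|=pn$, $\mathbb{E}\,e_W\le\tfrac12 p^2 n\Delta$ and $\mathbb{E}\,t_W\le p^3 T$. The key point is that passing to $W$ suppresses triangles by a factor $p^3$ while costing only a factor $p$ in the vertex count, so that for $p$ of order $\sqrt{n/T}$ the surviving graph is nearly triangle-free.

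Next I would clean up $W$ in two steps so that the triangle-free lemma applies with a much smaller maximum degree. Fix a threshold $K$ of order $p\Delta$ and let $B\subseteq W$ be the set of vertices of $G[W]$ whose degree exceeds $K$; since a fixed vertex $v$ has $\mathrm{Bin}(\deg_G(v),p)$ neighbors in $W$, Markov's inequality (or, when $p\Delta$ is large, a Chernoff bound with a better constant) bounds $\mathbb{E}|B|$ by a small multiple of $pn$. Deleting $B$ and then deleting one vertex from each remaining triangle produces an induced subgraph $G'$ that is triangle-free, has maximum degree at most $K$, and --- for a suitable outcome of $W$, which exists by the probabilistic method because $\mathbb{E}\big(|W|-|B|-t_W\big)$ has the value just computed --- has at least $c\,pn$ vertices for an absolute constant $c>0$. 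Applying the triangle-free lemma to $G'$ then yields
\[
\alpha(G)\ \ge\ \alpha(G')\ \ge\ \frac{|V(G')|}{8K}\,\log_2 K\ \ge\ \frac{c\,pn}{8K}\,\log_2 K .
\]

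Finally I would optimize over $p$: the choice $p\asymp\sqrt{n/T}$ (which balances the triangle term $p^3T$ against $pn$) makes $pn/K\asymp n/\Delta$ and $\log_2 K\asymp\log_2\!\big(\Delta\sqrt{n/T}\big)=\log_2\Delta-\tfrac12\log_2(T/n)$, so up to an absolute constant this is exactly the asserted bound. Driving the constant down to $10$ is then a matter of tightening the two deletion steps and using a sharper triangle-free estimate in the range where $p\Delta$ is large. One should also dispose separately of the easy regime in which $T$ is not much larger than $n$: there $G$ is already almost triangle-free, and one simply deletes one vertex per triangle of $G$ itself before invoking the previous lemma. (As usual, the inequality is only meaningful, and only asserted, when its right-hand side is positive and at most $n$.)

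I expect the main obstacle to be the degree-truncation step. Discarding the high-degree vertices of $G[W]$ is \emph{essential}: without it the maximum degree of the surviving graph could remain as large as $\Delta$, and one would gain nothing over the triangle-free lemma. One has to check both that this discards only a small constant fraction of $W$ and, more delicately, that the resulting maximum degree is genuinely of order $p\Delta$ rather than, say, $\Theta(\log n/\log\log n)$; this forces one to treat the low-degree and high-degree vertices of $G$ differently and to be careful about concentration when $p\Delta$ is only moderately large. The remaining ingredients --- the expectation computations and the optimization over $p$ --- are routine.
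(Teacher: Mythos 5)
The paper does not prove this lemma at all --- it is quoted verbatim from Bollob\'as's extension of the Ajtai--Koml\'os--Szemer\'edi bound --- so there is no internal proof to compare against and your proposal has to stand on its own. The strategy you chose (random sparsification with retention probability $p$, truncation of high-degree vertices, deletion of one vertex per surviving triangle, then an appeal to the triangle-free lemma with degree parameter $K\asymp p\Delta$) is the standard and correct way to reduce the few-triangles case to the triangle-free case, and the expectation computations $\mathbb{E}|W|=pn$, $\mathbb{E}\,t_W\le p^3T$, together with the observation that removing $B$ caps the degree at $K$, are all fine. Carried out carefully, this argument does yield $\alpha(G)\ge \frac{c\,|V(G)|}{\Delta}\bigl(\log_2\Delta-\tfrac12\log_2(T/|V(G)|)-O(1)\bigr)$ for some absolute constant $c>0$, which is all that the present paper ever uses (only the order of magnitude of the lower bound enters Section 4).

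However, as written the proposal does not establish the lemma in the stated form, and the losses are not merely cosmetic. First, the choice ``$p\asymp\sqrt{n/T}$ balances $p^3T$ against $pn$'' is exactly wrong at the critical point: with $p=\sqrt{n/T}$ one has $p^3T=pn$, so $\mathbb{E}(|W|-t_W)=0$ and nothing survives; you must take $p=\varepsilon\sqrt{n/T}$ for a definite $\varepsilon<1$, which costs an additive $\tfrac12\log_2(1/\varepsilon)$ inside the logarithm. The stated lemma has no additive slack, so this loss has to be absorbed separately, e.g.\ by invoking the trivial bound $\alpha(G)\ge |V(G)|/(\Delta+1)$ in the regime where $\log_2\Delta-\tfrac12\log_2(T/n)=O(1)$; that step is missing. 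Second, the constant $\tfrac{1}{10}$ is asserted, not derived: Markov's inequality forces $K$ to be a large multiple of $p\Delta$ to keep $|B|$ small, and combining the resulting factors with the $\tfrac18$ from the triangle-free lemma lands you near $\tfrac{1}{60}$, not $\tfrac{1}{10}$; the claim that a Chernoff bound and ``a sharper triangle-free estimate'' close this gap is plausible but is precisely where the real work of the proof lies, and it requires $p\Delta\to\infty$, another case split you do not address. Third, your disposal of the small-$T$ regime is not correct as stated: when $T<|V(G)|$ the term $-\tfrac12\log_2(T/|V(G)|)$ is a large positive \emph{bonus}, and ``delete one vertex per triangle of $G$ and apply the triangle-free lemma'' only gives $\frac{|V(G)|-T}{8\Delta}\log_2\Delta$, which does not recover that bonus (already for $T=|V(G)|/2$ the constants fail, and as $T\to0$ the stated inequality degenerates, so the regime needs an explicit convention rather than a one-line dismissal). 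In short: right reduction, correct asymptotics, but the quantitative content of the lemma --- the constant $\tfrac{1}{10}$ and the absence of an additive error term --- is not proved.
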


Note that a graph has relatively few triangles when the neighborhoods of its vertices are relatively sparse. Jiang and Vardy \cite{2004JiangGVbound}
generalized the results above for locally sparse graphs as follows.

\begin{lemma}\label{lem:JVsparse}
Let $G$ be a graph with maximum degree $\Delta$. Suppose for any vertex $v\in V(G)$, the subgraph induced by the neighborhood of $v$ has at most
$P$ edges, then we have
\begin{equation*}
\alpha(G)\geqslant \frac{|V(G)|}{10\Delta}(\log_{2}\Delta-\frac{1}{2}\log_{2}(\frac{P}{3})).
\end{equation*}
\end{lemma}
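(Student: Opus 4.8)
The plan is to obtain this lemma as an essentially immediate consequence of the preceding lemma (the Bollob\'as-type bound for graphs with at most $T$ triangles), by converting the hypothesis on the number of edges in neighborhoods into a bound on the total number of triangles in $G$.

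First I would record the standard correspondence between triangles and neighborhood edges: if $\{u,v,w\}$ spans a triangle in $G$, then the edge $\{u,w\}$ lies in the subgraph induced by $\Gamma(v)$, the edge $\{u,v\}$ lies in the subgraph induced by $\Gamma(w)$, and the edge $\{v,w\}$ lies in the subgraph induced by $\Gamma(u)$. Conversely, each edge of the neighborhood subgraph of a vertex $v$, together with $v$, forms a triangle. Counting incidences of the form (triangle, one of its three vertices) in two ways therefore gives
\begin{equation*}
3T \;=\; \sum_{v\in V(G)} e\bigl(G[\Gamma(v)]\bigr) \;\leqslant\; P\,\lvert V(G)\rvert ,
\end{equation*}
where $T$ is the number of triangles in $G$ and $e(G[\Gamma(v)])$ denotes the number of edges induced by $\Gamma(v)$; hence $T \leqslant P\lvert V(G)\rvert/3$.

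Next I would substitute this into the preceding lemma. Since its right-hand side is a decreasing function of $T$, replacing $T$ by the upper bound $P\lvert V(G)\rvert/3$ can only decrease the estimate, and $\tfrac12\log_2(T/\lvert V(G)\rvert)\leqslant \tfrac12\log_2(P/3)$. This yields exactly
\begin{equation*}
\alpha(G)\;\geqslant\;\frac{\lvert V(G)\rvert}{10\Delta}\Bigl(\log_2\Delta-\tfrac12\log_2\bigl(\tfrac{P}{3}\bigr)\Bigr),
\end{equation*}
as claimed. The degenerate case $P=0$ (equivalently $T=0$) should be set aside first: there $G$ is triangle-free and the sharper Ajtai--Koml\'os--Szemer\'edi-type bound applies directly, so one may assume $P\geqslant 1$.

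There is no genuine obstacle here; all the analytic difficulty lives in the already-cited lemma, and the only point requiring a moment's care is the monotonicity step, namely checking that passing to the upper bound on $T$ remains valid even when $P<3$, in which case $-\tfrac12\log_2(P/3)$ is positive and the resulting bound is merely stronger than the triangle-free estimate and still correct. (One could instead reprove the statement from scratch by a semi-random "nibble''/deletion argument, but routing through the triangle-counting lemma is far shorter and is the approach I would take.)
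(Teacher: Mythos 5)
Your derivation is correct, and it matches how the result is meant to be obtained: the paper itself states this lemma without proof, citing Jiang and Vardy, and positions it as a generalization of the preceding triangle-count lemma, which is exactly the reduction you carry out via $3T=\sum_{v}e\bigl(G[\Gamma(v)]\bigr)\leqslant P\,\lvert V(G)\rvert$ and monotonicity in $T$. Your handling of the degenerate triangle-free case via the Ajtai--Koml\'os--Szemer\'edi bound is a reasonable way to patch the only spot where the substitution is not literal, so nothing further is needed.
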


\section{An asymptotic improvement of the lower bound}\label{sec:TheoBound}

Before presenting the main results of this section, it should be noted that $\mathcal{C}_{B}(n,d)$ can be determined under some special cases.

\begin{theorem}
$\mathcal{C}_{B}(n,1)=n!$. $\mathcal{C}_{B}(n,2)=(n-1)!$. $\mathcal{C}_{B}(n,n-1)\leqslant n$ and equality holds if $n$ is not 3 or 5.
\end{theorem}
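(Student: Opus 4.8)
The plan is to establish each of the four claims separately, using Lemma~\ref{lem:db} (the characteristic-set description $d_B(\pi_1,\pi_2)=|A(\pi_1)\setminus A(\pi_2)|$) and Lemma~\ref{lem:GVSP} as the main tools. For the first claim, $\mathcal{C}_B(n,1)=n!$ is immediate: distinct permutations have distinct characteristic sets, each of size $n-1$, so $d_B(\pi_1,\pi_2)\geq 1$ whenever $\pi_1\neq\pi_2$; hence the whole of $S_n$ is an $(n,1)$-code. For $\mathcal{C}_B(n,2)=(n-1)!$, I would apply Lemma~\ref{lem:GVSP} with $t=0$, $d=1$ \emph{no}—rather, use the sphere-packing side directly with $t=1$ is wrong since $d=2$ is even. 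Instead: the upper bound $(n-1)!$ follows from a direct packing argument—permutations at distance $\leq 1$ from each other (those sharing $n-2$ of their $n-1$ consecutive pairs) must be identified, and one shows each ``distance-$\leq 1$ ball'' has size $n$; alternatively, partition $S_n$ into classes according to the first coordinate $\pi(1)$ and argue two permutations in the same class can still be at distance $1$. A cleaner route: show $d_B(\pi_1,\pi_2)\geq 2$ forces $A(\pi_1)\neq A(\pi_2)$ in at least two pairs, and by Lemma~\ref{lem:ballsize} the $1$-ball has size exactly $n$ (the product $\prod_{i=1}^{1}(n-i)=n-1$ lower bound and $\prod_{i=0}^{1}(n-i)=n(n-1)$ upper bound are not tight here, so I would instead compute $|b_B(n,1)|$ exactly). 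Granting $|b_B(n,1)|=n$, Lemma~\ref{lem:GVSP}-type reasoning gives $\mathcal{C}_B(n,2)\le n!/n=(n-1)!$; for the matching construction, take all permutations with $\pi(n)=n$, i.e.\ a coset-like family, and verify any two differ in at least two consecutive pairs.

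For $\mathcal{C}_B(n,n-1)\leq n$, the key observation is that $d_B(\pi_1,\pi_2)=n-1$ forces $A(\pi_1)\cap A(\pi_2)=\emptyset$, since $|A(\pi_i)|=n-1$ and $|A(\pi_1)\setminus A(\pi_2)|=n-1$ means the two $(n-1)$-subsets of $\mathcal{P}_n$ are disjoint. So an $(n,n-1)$-code corresponds to a family of pairwise edge-disjoint directed Hamiltonian paths in the complete directed graph on $[n]$. Since $|\mathcal{P}_n|=n(n-1)$ and each path uses $n-1$ edges, a trivial count already gives at most $n$ such paths; hence $\mathcal{C}_B(n,n-1)\leq n$. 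The hard part is the equality case: I must exhibit $n$ pairwise edge-disjoint directed Hamiltonian paths whenever $n\notin\{3,5\}$, and this is exactly a decomposition of the complete digraph $K_n^{*}$ (or equivalently, a suitable Hamiltonian-path decomposition of $K_n$) into $n$ Hamiltonian paths. This is a classical design-theoretic fact: the complete digraph on $n$ vertices decomposes into $n$ directed Hamiltonian paths for all $n$ except small sporadic cases—the exceptions $n=3,5$ are where the relevant Walecki-type or starter constructions fail. I would invoke the known decomposition result (Tillson's theorem on Hamiltonian decompositions of complete digraphs, or the corresponding path-decomposition statement), checking that its exceptional set matches $\{3,5\}$, and separately verify by hand that for $n=3$ and $n=5$ no such family of size $n$ exists (a short case analysis on the at most $n(n-1)$ edges).

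The main obstacle is the equality assertion for $d=n-1$: the upper bound $\le n$ is a one-line counting argument, but achieving it requires the full strength of a Hamiltonian-path decomposition theorem for the complete (di)graph, and one must be careful that the directed version is what is needed (characteristic sets are sets of \emph{ordered} pairs) and that the stated exceptions $n=3,5$ are precisely correct—in particular ruling out other small exceptions and confirming $n=4$ and all $n\geq 6$ do admit the decomposition. I would structure the write-up as: (i) the two easy identities; (ii) the disjointness observation and the counting upper bound for $d=n-1$; (iii) citation of the digraph Hamiltonian-path decomposition result plus explicit handling of $n\in\{3,5\}$.
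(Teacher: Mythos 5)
Your overall plan matches the paper's: $d=1$ is trivial, the upper bound for $d=n-1$ is the same disjointness-plus-counting argument, and the equality case reduces to a Hamiltonian decomposition of a complete digraph. There is, however, one genuine gap: your justification of $\mathcal{C}_B(n,2)\leqslant (n-1)!$ as ``granting $|b_B(n,1)|=n$, Lemma~\ref{lem:GVSP}-type reasoning gives $n!/n$'' is not valid as stated. Sphere-packing with radius-$1$ balls yields an upper bound only for minimum distance $3$; for minimum distance $2$ the radius-$1$ balls around two codewords at distance exactly $2$ may overlap, so their disjointness cannot be assumed. The load-bearing fact, which the paper makes explicit and you never state or prove, is that $d_B(\pi,\sigma)=1$ if and only if $\sigma$ is a nontrivial cyclic shift of $\pi$. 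This makes ``distance at most $1$'' an equivalence relation, so the radius-$1$ balls are exactly the $(n-1)!$ cyclic-shift classes, each a clique of size $n$ in $\mathcal{G}_{n,2}$; a minimum-distance-$2$ code meets each class at most once, and one representative per class attains $(n-1)!$. You gesture at this (``must be identified'', ``each ball has size $n$'') but the argument you actually commit to is the invalid packing one, and the cyclic-shift characterization is also what is needed to verify your proposed construction $\{\pi:\pi(n)=n\}$.

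For the equality case of $d=n-1$ your route differs mildly from the paper's but is legitimate: you invoke a decomposition of the complete digraph on $[n]$ into $n$ arc-disjoint directed Hamiltonian paths uniformly for all $n$, which (by adjoining a vertex $\infty$, noting each vertex starts and ends exactly one path) is equivalent to Tillson's Hamiltonian-cycle decomposition of the complete digraph on $n+1$ vertices, whose exceptional orders $n+1\in\{4,6\}$ give precisely $n\in\{3,5\}$. The paper instead gives an explicit difference-type construction for even $n$ and cites the decomposition theorem only for odd $n\geqslant 7$; your version is shorter but leans entirely on the citation, and you correctly flag that the \emph{directed} version is required. Note also that the theorem only asserts equality for $n\notin\{3,5\}$, so your planned by-hand refutation for $n=3,5$ is optional (the paper records $\mathcal{C}_B(3,2)=2$ and $\mathcal{C}_B(5,4)=4$ as an aside).
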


\begin{proof}
\begin{enumerate}
  \item Trivially take all the permutations in $S_n$ and we have $\mathcal{C}_{B}(n,1)=n!$.
  \item It is easy to check that for any two permutations $\pi$ and $\sigma$, $d_B(\pi,\sigma)=1$ if and only if $\sigma$ is a cyclic shift of $\pi$.
  That is, if $\pi=(\pi(1),\dots,\pi(n))$ and $d_B(\pi,\sigma)=1$, then $\sigma$ is of the form $\sigma=(\pi(t),\dots,\pi(n),\pi(1),\dots,\pi(t-1))$ for some $2\leqslant t \leqslant n$.
Under the operation of cyclic shifting, $S_n$ is divided into $(n-1)!$ equivalent classes where each class is known as a circular permutation.
By picking an arbitrary permutation from each equivalent class we obtain an $(n,2)$-permutation code of cardinality $(n-1)!$.
  \item For any two distinct permutations $\pi$ and $\sigma$ in an $(n,n-1)$-permutation code, their characteristic sets are disjoint according to Lemma \ref{lem:db}.
Since each characteristic set is a subset of $\mathcal{P}_{n}$ of cardinality $n-1$, $|\mathcal{P}_{n}|=n(n-1)$, then the number of codewords is at most $n$.
\begin{enumerate}
  \item Suppose $n$ is even, $n=2p$. Define $a_{2i-1}=2i-1$ for $1\leqslant i \leqslant p$ and $a_{2i}=2p-2i$ for $1\leqslant i \leqslant p-1$, i.e., $(a_{1},a_{2},\ldots,a_{n-1})=(1,2p-2,3,2p-4,\ldots,p,\ldots,4,2p-3,2,2p-1)$.
  For every $1\leqslant i \leqslant n$, let the $i$-th codeword be $(i,i+a_{1},i+a_{1}+a_{2},\dots,i+\sum\limits_{j=1}^{k}a_{j},\ldots,i+\sum\limits_{j=1}^{n-1} a_{j})$, where each entry is taken modulo $n$ (and note that we use `$n$' instead of `$0$' for some entry).
  It is routine to check that $\sum\limits_{j=1}^{2i}a_{j}\equiv -i \pmod{n}$, $1\leqslant i \leqslant p-1$ and $\sum\limits_{j=1}^{2i-1}a_{j} \equiv i \pmod{n}$, $1\leqslant i \leqslant p$. Therefore $\sum\limits_{j=1}^{k}a_{j}$ are distinct modulo $n$ for $1\leqslant k\leqslant n$. So these $n$ codewords defined above are indeed codewords in $S_{n}$.
  For every pair $(c,d)$ with $d-c\equiv a_{k} \pmod{n}$, it appears exactly once, in the $i$-th codeword with $c\equiv i+\sum\limits_{j=1}^{k-1} a_j$ and $d\equiv i+\sum\limits_{j=1}^{k} a_j$.
  \item Suppose $n$ is odd. To construct an $(n,n-1)$-permutation code of size $n$, consider the complete directed graph on $n+1$ vertices $[n]\cup\{\infty\}$.
  For each $\pi$, its characteristic set $A(\pi)$ also represents the directed Hamiltonian path on $n$ vertices. Further add the edges $(\infty,\pi(1))$ and $(\pi(n),\infty)$ into $A(\pi)$.
  Then each permutation corresponds to a directed Hamiltonian cycle on $[n]\cup\{\infty\}$. Thus an $(n,n-1)$-permutation code of size $n$ is equivalent
  to a Hamiltonian decomposition in the complete directed graph on $[n]\cup\{\infty\}$. Hamiltonian decomposition is a well studied topic, for example in \cite{1980TimothyDecomposition}.
  It has been shown that for odd integers $n \geqslant 7$, the edges of the complete directed graph on $n+1$ vertices can be partitioned into $n$ directed Hamiltonian circuits.
\end{enumerate}
  Therefore, $\mathcal{C}_{B}(n,n-1)\leqslant n$ and equality holds if $n$ is even or $n\geqslant 7$ is odd. Moreover, it can be easily checked that $\mathcal{C}_{B}(3,2)=2$ and $\mathcal{C}_{B}(5,4)=4$.
\end{enumerate}
\end{proof}

\begin{remark}
When $n+1$ is prime, there is another construction of an $(n,n-1)$-permutation code of size $n$ different from the one in the proof above. Consider the code $\{(i,2i,\dots,(n-1)i,ni):1\leqslant i \leqslant n\}$, with each entry modulo $(n+1)$. It is straightforward to check that every pair of $(a,b)$ appears exactly once (in the $i$th codeword, $i\equiv(b-a)\pmod{n+1}$).
\end{remark}

After solving these special cases, the rest of this section is devoted to improving the asymptotic lower bound of $\mathcal{C}_{B}(n,d)$ with $d\geqslant 3$ being a fixed constant, while $n$ approaches infinity.
The idea is to analyze the independence number of the corresponding block permutation graph, defined as follows.

\begin{definition}
For given positive integers $n$ and $d\geqslant 3$, the $(n,d)$-block permutation graph $\mathcal{G}_{n,d}$ is the graph with vertex set $S_n$ and edge set
$\{(\pi,\sigma):\pi\neq\sigma,d_B(\pi,\sigma)<d\}$.
\end{definition}

The codewords of an $(n,d)$-permutation code under block permutation metric are vertices of an independent set in $\mathcal{G}_{n,d}$. Conversely, any
independent set in $\mathcal{G}_{n,d}$ is an $(n,d)$-permutation code. To get a lower bound of $\mathcal{C}_{B}(n,d)$ via the graph theoretic approach using Lemma \ref{lem:JVsparse},
we need to calculate some parameters of the graph $\mathcal{G}_{n,d}$.

Let $\mathcal{H}_{n,d}$ be the subgraph induced by the neighborhood of the identity permutation $(1,2,3,\ldots,n)$, and let $R(n,k)$ be the set of all
permutations in $S_n$ which are exactly at distance $k$ from the identity, i.e.,
\begin{equation*}
 R(n,k)=\{\sigma\in S_n : d_{B}(\sigma,id)=k \}.
\end{equation*}
Then the induced subgraph $\mathcal{H}_{n,d}$ has the vertex set $V(\mathcal{H}_{n,d})=\bigcup\limits_{k=1}^{d-1} R(n,k)$. The size of $R(n,k)$ is a well-studied topic in \cite{2002MyersCounting}.

\begin{lemma}\textup{\cite{2002MyersCounting}}  For all integers $1\leqslant k\leqslant n-1$,
\begin{equation*}
  |R(n,k)|=k!\binom{n-1}{k}\sum_{i=0}^{k}(-1)^{k-i}\frac{(i+1)}{(k-i)!}.
\end{equation*}
\end{lemma}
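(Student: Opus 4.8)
The plan is to reduce the computation of $|R(n,k)|$ to the classical enumeration of permutations by their number of \emph{successions} and then run an inclusion--exclusion argument. First, by Lemma~\ref{lem:db} and the fact that $A(\sigma)\cap A(id)$ and $A(\sigma)\setminus A(id)$ partition $A(\sigma)$, which has size $n-1$, we get $d_B(\sigma,id)=|A(\sigma)\setminus A(id)|=(n-1)-|A(\sigma)\cap A(id)|$. Since $A(id)=\{(j,j+1):1\leqslant j\leqslant n-1\}$, a pair $(j,j+1)$ lies in $A(\sigma)\cap A(id)$ exactly when $j$ is immediately followed by $j+1$ in $\sigma$; call such a $j$ a succession of $\sigma$ and let $S(\sigma)\subseteq[n-1]$ be the set of successions. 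Distinct successions give distinct pairs of $A(id)$, so $|A(\sigma)\cap A(id)|=|S(\sigma)|$ and hence $d_B(\sigma,id)=k$ iff $|S(\sigma)|=n-1-k$. Writing $r:=n-1-k$, the task becomes: count the permutations in $S_n$ with exactly $r$ successions.

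Next I would fix $T\subseteq[n-1]$ and count $N(T):=|\{\sigma\in S_n:T\subseteq S(\sigma)\}|$. Imposing $(j,j+1)\in A(\sigma)$ for every $j\in T$ means the directed edges $\{(j,j+1):j\in T\}$ all belong to the directed Hamiltonian path on $[n]$ corresponding to $\sigma$; these edges chain into a disjoint union of increasing runs, namely the connected components of the graph on $[n]$ with edge set $\{\{j,j+1\}:j\in T\}$, of which there are exactly $n-|T|$. Contracting each run to a single vertex, $\sigma$ is recovered precisely by concatenating these $n-|T|$ directed pieces, in an arbitrary order, into one directed Hamiltonian path; hence $N(T)=(n-|T|)!$, depending only on $|T|$. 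By M\"obius inversion over the Boolean lattice of subsets of $[n-1]$, the number of $\sigma$ with $S(\sigma)=T$ equals $\sum_{T'\supseteq T}(-1)^{|T'|-|T|}(n-|T'|)!$. Summing this over the $\binom{n-1}{r}$ sets $T$ of size $r$ and grouping the terms by $s=|T'|$ (each $T'$ of size $s$ is counted with multiplicity $\binom{s}{r}$) yields
\begin{equation*}
|R(n,k)|=\sum_{s=r}^{n-1}\binom{n-1}{s}\binom{s}{r}(-1)^{s-r}(n-s)!.
\end{equation*}

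Finally I would simplify. Using the subset-of-a-subset identity $\binom{n-1}{s}\binom{s}{r}=\binom{n-1}{r}\binom{n-1-r}{s-r}$, pulling $\binom{n-1}{r}=\binom{n-1}{k}$ out of the sum, substituting $s=r+m$ (so that $n-s=k+1-m$ because $r=n-1-k$), and then reindexing by $i=k-m$, the sum collapses to $\binom{n-1}{k}\sum_{i=0}^{k}\binom{k}{i}(-1)^{k-i}(i+1)!$. Since $\binom{k}{i}(i+1)!=k!\,(i+1)/(k-i)!$, this is exactly $k!\binom{n-1}{k}\sum_{i=0}^{k}(-1)^{k-i}\frac{i+1}{(k-i)!}$, which is the stated formula (recovering the classical count, attributed here to \cite{2002MyersCounting}).

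The only genuinely delicate point is the evaluation $N(T)=(n-|T|)!$: one must verify that prescribing an arbitrary set $T$ of successions is equivalent to contracting the increasing runs determined by $T$, and that the resulting pieces may then be concatenated completely freely (no further constraint), so that $N(T)$ depends on $T$ only through $|T|$. Once this is established, the rest is routine binomial bookkeeping.
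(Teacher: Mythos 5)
Your proof is correct. Note that the paper itself offers no proof of this lemma --- it is simply cited from Myers's work on counting permutations by block moves --- so there is no in-paper argument to compare against. Your derivation is the classical one: translating $d_B(\sigma,id)=k$ into ``$\sigma$ has exactly $n-1-k$ successions,'' evaluating $N(T)=(n-|T|)!$ by contracting the increasing runs forced by $T$, and then applying inclusion--exclusion and routine binomial identities. I checked the key step $N(T)=(n-|T|)!$ (the runs may indeed be concatenated freely, and every permutation containing the prescribed successions arises exactly once this way), as well as the reindexing $s=r+m$, $i=k-m$ that turns $\sum_{s}\binom{n-1}{s}\binom{s}{r}(-1)^{s-r}(n-s)!$ into $k!\binom{n-1}{k}\sum_{i=0}^{k}(-1)^{k-i}\frac{i+1}{(k-i)!}$, and both are sound; small cases such as $n=3$ confirm the formula. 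This is a complete, self-contained justification of the cited result.
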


Since $\binom{n}{a}=\Theta(n^{a})$ when $a$ is a fixed positive integer and $n$ goes to infinity, then asymptotically
$|R(n,k)|=\Theta(n^{k})$, $1\leqslant k \leqslant d-1$ and thus $|b_{B}(n,d-1)|=\sum\limits_{k=0}^{d-1}|R(n,k)|=\Theta(n^{d-1})$, when $d$ is fixed and $n$ goes to infinity.

To apply Lemma \ref{lem:JVsparse}, we already have $V(\mathcal{G}_{n,d})=n!$ and $\mathcal{G}_{n,d}$ is a regular graph of degree $\Delta=b_{B}\left(n,d-1\right)-1=\Theta(n^{d-1})$.
The remaining parameter to compute is $P(n,d)$, the number of edges in the induced subgraph $\mathcal{H}_{n,d}$.

\begin{lemma} \label{lem:2d-3}
For a fixed positive integer $d \geqslant 3$, $P(n,d)=O(n^{2d-3})$ when $n$ goes to infinity.
\end{lemma}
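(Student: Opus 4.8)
The plan is to bound $P(n,d)$, the number of edges in the subgraph $\mathcal{H}_{n,d}$ induced by the neighborhood of the identity, by counting ordered pairs $(\sigma,\tau)$ with $\sigma,\tau\in\bigcup_{k=1}^{d-1}R(n,k)$ and $1\le d_B(\sigma,\tau)\le d-1$. By left-invariance of $d_B$ (Lemma \ref{lem:pro}), $d_B(\sigma,\tau)=d_B(id,\sigma^{-1}\circ\tau)$, so writing $\rho=\sigma^{-1}\circ\tau$ we need $\rho\in\bigcup_{k=1}^{d-1}R(n,k)$ as well; hence an edge of $\mathcal{H}_{n,d}$ is determined by a pair $(\sigma,\rho)$ with both $\sigma$ and $\rho$ lying at distance at most $d-1$ from $id$, subject to the additional constraint that $\tau=\sigma\circ\rho$ also lies at distance at most $d-1$ from $id$. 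Since each of $R(n,k)$ has size $\Theta(n^k)$, the naive count $|b_B(n,d-1)|^2=\Theta(n^{2d-2})$ is off by exactly one factor of $n$ from the target, so the whole point is to extract that saving from the third constraint.

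Next I would translate everything into the language of characteristic sets and Hamiltonian paths, as set up in the paragraph after Definition \ref{def:char}. For $\sigma$ at distance $k\le d-1$ from $id$, the set $A(id)\setminus A(\sigma)$ has size $k$ by Lemma \ref{lem:db}; equivalently $A(\sigma)$ agrees with the identity path $(1,2,\dots,n)$ except that $k$ of the "consecutive" edges $(i,i+1)$ are broken, which splits $[n]$ into at most $k+1$ monotone runs that are then reassembled (in a minimal way) into $\sigma$. Thus $\sigma$ is specified by choosing the $\le k$ breakpoints — $O(n^{k})$ choices — and then a minimal rearrangement of the resulting $\le k+1$ blocks, which is an $O(1)$ factor once $d$ is fixed. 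The key structural observation I would push on: when we additionally require $d_B(\sigma,\tau)\le d-1$ and $d_B(id,\tau)\le d-1$, the breakpoints of $\sigma$ relative to $id$ and the "extra" breakpoints of $\tau$ relative to $\sigma$ cannot be chosen fully independently — $\tau$ must \emph{also} be close to $id$, and each edge of $A(id)$ destroyed in passing from $id$ to $\sigma$ to $\tau$ must be accounted for, so the total number of identity-edges that are broken across the two steps is at most $2(d-1)$, but the constraint $d_B(id,\tau)\le d-1$ forces many of the breaks introduced in the second step to "heal" breaks from the first step, i.e. to reuse the same $O(1)$ locations rather than picking fresh ones costing another factor of $n$ each.

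Concretely I would organize the count as follows: fix $d_B(id,\sigma)=k_1$ and $d_B(\sigma,\tau)=k_2$ with $1\le k_1,k_2\le d-1$; the number of $\sigma$'s is $O(n^{k_1})$; given $\sigma$, the number of $\tau$'s with $d_B(\sigma,\tau)=k_2$ is $O(n^{k_2})$; but I claim the number of such $\tau$ that \emph{also} satisfy $d_B(id,\tau)\le d-1$ is only $O(n^{\,d-1-k_1})$, because $A(\tau)$ must miss at most $d-1$ edges of $A(id)$, at least... well, at most $k_1$ of those "misses" are already forced by $\sigma$, and the second step can only restore the broken identity-edges using the blocks it has available — the genuinely new free choices (breakpoints at identity-edges not already broken) number at most $(d-1)-(\text{identity-edges broken by }\sigma\text{ that remain broken in }\tau)$, which after a short case analysis is $\le d-1-k_1+O(1)$ ... giving the product $O(n^{k_1})\cdot O(n^{d-1-k_1})=O(n^{d-1})$ per pair $(k_1,k_2)$, and summing over the $O(1)$ choices of $(k_1,k_2)$ yields $P(n,d)=O(n^{\,2(d-1)-1})=O(n^{2d-3})$.

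I expect the main obstacle to be making the last claim fully rigorous: precisely quantifying how the constraint $d_B(id,\tau)\le d-1$ limits the number of "free" breakpoints available in the $\sigma\to\tau$ step. The subtlety is that a breakpoint of $\tau$ relative to $\sigma$ may fall at an identity-edge, at a non-identity-edge already present in $A(\sigma)$, or be entirely new, and tracking which of these carry a factor of $n$ versus only $O(1)$ choices requires a careful bookkeeping of the multiset $A(id)\setminus A(\tau)\subseteq (A(id)\setminus A(\sigma))\cup(A(\sigma)\setminus A(\tau))$ — an inclusion that follows from $A(id)\setminus A(\tau)=\big(A(id)\setminus A(\sigma)\big)\setminus A(\tau)\;\cup\;\big(A(id)\cap A(\sigma)\big)\setminus A(\tau)$ and $\big(A(id)\cap A(\sigma)\big)\setminus A(\tau)\subseteq A(\sigma)\setminus A(\tau)$. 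Once this containment is in hand, $|A(id)\setminus A(\sigma)|=k_1$ and $|A(id)\setminus A(\tau)|\le d-1$ together pin down how many new location-choices the second step can afford, and the $O(n^{2d-3})$ bound falls out. I would present this as a clean lemma: the number of $\tau$ with $d_B(\sigma,\tau)\le d-1$ and $d_B(id,\tau)\le d-1$ is $O(n^{\,d-1-d_B(id,\sigma)})$, and then the theorem is immediate.
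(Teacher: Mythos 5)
Your overall strategy (bound the degree of each vertex of $\mathcal{H}_{n,d}$ and extract a saving from the requirement that $\tau$ stay close to the identity) is reasonable, but the key lemma you propose --- that for $\sigma$ with $d_B(id,\sigma)=k_1$ the number of $\tau$ with $d_B(\sigma,\tau)\leqslant d-1$ and $d_B(id,\tau)\leqslant d-1$ is $O(n^{d-1-k_1})$ --- is false. Take $d=3$ and $\sigma=(a+1,\dots,b,1,\dots,a,b+1,\dots,n)\in R(n,2)$, so that $A(id)\setminus A(\sigma)=\{(a,a+1),(b,b+1)\}$, $A(\sigma)\setminus A(id)=\{(b,1),(a,b+1)\}$ and $k_1=2$. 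For every $c<b$ with $c\neq a$, the permutation $\tau_c=(c+1,\dots,b,1,\dots,c,b+1,\dots,n)$ lies in $R(n,2)$ and satisfies $A(\sigma)\setminus A(\tau_c)=\{(c,c+1),(a,b+1)\}$, hence $d_B(\sigma,\tau_c)=2\leqslant d-1$. Choosing $b$ close to $n$ gives $\Theta(n)$ neighbors of $\sigma$ inside $V(\mathcal{H}_{n,3})$, whereas your lemma predicts $O(n^{d-1-k_1})=O(1)$. The heuristic fails because a neighbor $\tau$ can re-use a non-identity pair of $\sigma$ at no cost in fresh breakpoints (here $(b,1)$, whose second coordinate is the forced path-start $1$), while its one genuinely new break $(c,c+1)$ is charged simultaneously to $A(id)\setminus A(\tau_c)$ and to $A(\sigma)\setminus A(\tau_c)$, so both distance budgets are met while $c$ ranges freely. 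There is also an internal inconsistency at the end of your write-up: $O(n^{k_1})\cdot O(n^{d-1-k_1})=O(n^{d-1})$, and summing over $O(1)$ pairs $(k_1,k_2)$ would give $P(n,d)=O(n^{d-1})$, not $O(n^{2d-3})$; for $d\geqslant 3$ that is strictly stronger than the lemma and, as the example shows (one can check $P(n,3)=\Theta(n^3)$), unattainable.

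The actual saving is a single factor of $n$, needed only in the extremal stratum. The paper first notes that the number of pairs $(\pi,\sigma)$ with $\pi\in R(n,k_1)$, $\sigma\in R(n,k_2)$ is already $O(n^{2d-3})$ whenever $k_1+k_2\leqslant 2d-3$, so only $k_1=k_2=d-1$ requires work. There it stratifies by $x=\left|\bigl(A(id)\setminus A(\pi)\bigr)\cap\bigl(A(id)\setminus A(\sigma)\bigr)\right|$: for each fixed $\pi$ and each $x\geqslant 1$ the number of admissible $\sigma$ is $O(n^{d-1-x})=O(n^{d-2})$ (this needs no edge condition at all), and the case $x=0$ is shown to contribute no edges whatsoever by a structural argument: an edge would force $A(\pi)\setminus A(\sigma)=A(id)\setminus A(\sigma)$ and $A(\sigma)\setminus A(\pi)=A(id)\setminus A(\pi)$, and concatenating the $d$ common directed paths with identity edges then yields either $\sigma=\pi$ or a cyclic rotation sharing $d-2$ of the disputed edges, contradicting $x=0$. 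If you wish to keep a per-vertex formulation, the correct target is that each vertex of $R(n,d-1)$ has $O(n^{d-2})$ neighbors in $R(n,d-1)$ (independent of $k_1$), and proving it still requires eliminating the $x=0$ case, which is the genuinely nontrivial step missing from your sketch.
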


\begin{proof}
The number of vertices in $R(n,k)$ is asymptotically $\Theta(n^{k})$. Thus the number of edges connecting some $\pi\in R(n,k_1)$ and some $\sigma\in R(n,k_2)$ is $\Theta(n^{k_1+k_2})=O(n^{2d-3})$ as long as $k_1+k_2 \leqslant 2d-3$. Therefore, to prove the lemma we only need to focus on bounding the number of edges connecting some $\pi\in R(n,d-1)$ and some $\sigma\in R(n,d-1)$.

Consider the characteristic sets of such $\pi$ and $\sigma$. $|A(id)\setminus A(\pi)|=|A(id)\setminus A(\sigma)|=d-1$. Let $x(\pi,\sigma)$ be the number of consecutive pairs in $A(id)$ contained in neither $A(\pi)$ nor $A(\sigma)$, i.e.,
$$x(\pi,\sigma)=|\big(A(id)\setminus A(\pi)\big)\cap \big(A(id)\setminus A(\sigma)\big)|.$$

For a fixed $\pi\in R(n,d-1)$, the number of permutations $\sigma\in R(n,d-1)$ with $x(\pi,\sigma)=x$ is at most ${d-1 \choose x}{n-d\choose {d-1-x}}=\Theta(n^{d-1-x})$, since $A(id)\setminus A(\sigma)$ contains exactly $x$ pairs out of the $d-1$ pairs in $A(id)\setminus A(\pi)$ and $d-1-x$ pairs out of the $n-d$ pairs in $A(id)\cap A(\pi)$. Recall that $|R(n,d-1)|=\Theta(n^{d-1})$ and then the number of edges connecting $\pi,\sigma \in R(n,d-1)$ with $1\leqslant x(\pi,\sigma)\leqslant d-1$ is at most $\Theta(n^{2d-3})$. Therefore, to prove the lemma we only need to focus on bounding the number of edges connecting some $\pi\in R(n,d-1)$ and some $\sigma\in R(n,d-1)$, with $x(\pi,\sigma)=0$. Now we claim that in fact there are no such edges.

Since $x(\pi,\sigma)=0$, then $\big(A(id)\setminus A(\sigma)\big)\subset \big(A(\pi)\setminus A(\sigma)\big)$ and thus $d_B(\pi,\sigma)\geqslant d-1$. If $\pi$ and $\sigma$ are connected, then it must hold that $d_B(\pi,\sigma)=d-1$ and
$$A(id)\setminus A(\sigma)~=~A(\pi)\setminus A(\sigma)$$
and simultaneously
$$A(id)\setminus A(\pi)~=~A(\sigma)\setminus A(\pi).$$

Now consider the $n-d$ pairs in $A(\pi)\cap A(\sigma)$. In the graph with vertex $[n]$, label all the directed edges $(x,y)$ where $(x,y)\in A(\pi)\cap A(\sigma)$ and call this graph $\mathcal{G}$. The union of $A(\pi)\cap A(\sigma)$ and $A(id)\setminus A(\sigma)$ is $A(\pi)$, the directed Hamiltonian path corresponding to $\pi$. Therefore $\mathcal{G}$ is a union of $d$ non-intersecting directed paths (there may exist isolated vertices and each isolated vertex is also considered as a directed path), where the $j$th path is denoted as $P_j=(x_j\rightarrow\cdots\rightarrow y_j)$, indicating that it starts with $x_j$ and ends with $y_j$, $1\leqslant j \leqslant d$. The directed Hamiltonian path corresponding to $\pi$ is then a concatenation of these paths and without loss of generality it can be written as $P_1\rightarrow P_2\rightarrow \cdots \rightarrow P_d$. Since the edges connecting the $P_j$'s arise from $A(id)\setminus A(\sigma)$, then it implies that $x_{j+1}=y_j+1$ for $1\leqslant j \leqslant d-1$.

Now since the directed Hamiltonian path corresponding to $\sigma$ is also formed by using the $d-1$ edges in $A(id)\setminus A(\pi)$ to connect the $P_j$'s, then there are only two cases. The first case is when $x_1\neq y_d+1$, then there is only a unique way to connect the $P_j$'s via edges corresponding to consecutive pairs, i.e., $\sigma=\pi$. The other case is when $x_1=y_d+1$ and the directed Hamiltonian path corresponding to $\sigma$ will be of the form $P_t\rightarrow P_{t+1}\rightarrow \cdots \rightarrow P_d \rightarrow P_1 \rightarrow \cdots \rightarrow P_{t-1}$. However, since $d\geqslant 3$, then $\sigma$ and $\pi$ will share $d-2$ edges $\{(y_j,x_{j+1})|j\neq t-1, 1\leqslant j \leqslant d-1\}$, which contradicts to $x(\pi,\sigma)=0$.

Therefore, the last kind of edges we focus on do not exist at all and the total number of edges in the graph $\mathcal{H}_{n,d}$ is $P(n,d)=O(n^{2d-3})$.
\end{proof}

Now we are ready to apply Lemma \ref{lem:JVsparse} to obtain the new lower bound of $\mathcal{C}_{B}(n,d)$.

\begin{theorem}
When $d$ is fixed, $d \geqslant 3$ and $n$ goes into infinity, there exists an $(n,d)$-permutation code under block permutation metric with size
$$\mathcal{C}_{B}(n,d)=\alpha(\mathcal{G}_{n,d})\geqslant \frac{n!}{10\Delta}(\log_{2}\Delta-\frac{1}{2}\log_{2}(\frac{P(n,d)}{3})) = \Omega(\frac{n!\log{n}}{n^{d-1}}).$$
Particularly, it improves the Gilbert-Varshamov bound by a factor of $\Omega(\log(n)).$
\end{theorem}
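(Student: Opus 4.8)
The plan is to combine the three parameter estimates we have already assembled with Lemma \ref{lem:JVsparse}, applied to the graph $G = \mathcal{G}_{n,d}$. We know $|V(\mathcal{G}_{n,d})| = n!$, that $\mathcal{G}_{n,d}$ is regular of degree $\Delta = |b_B(n,d-1)| - 1 = \Theta(n^{d-1})$, and that the subgraph induced by the neighborhood of the identity has $P(n,d) = O(n^{2d-3})$ edges. By the left-invariance property in Lemma \ref{lem:pro}, left-multiplication by any fixed $\tau \in S_n$ is an automorphism of $\mathcal{G}_{n,d}$, so the neighborhood of every vertex induces a subgraph isomorphic to $\mathcal{H}_{n,d}$; hence the hypothesis ``every neighborhood induces at most $P$ edges'' of Lemma \ref{lem:JVsparse} holds with $P = P(n,d)$. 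This vertex-transitivity observation is the only structural point that needs to be made explicit before the lemma can be invoked.

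Next I would substitute the asymptotics into the bound $\alpha(\mathcal{G}_{n,d}) \geqslant \frac{n!}{10\Delta}\bigl(\log_2 \Delta - \tfrac{1}{2}\log_2(P(n,d)/3)\bigr)$. Since $\Delta = \Theta(n^{d-1})$ we have $\log_2 \Delta = (d-1)\log_2 n + O(1)$, and since $P(n,d) = O(n^{2d-3})$ we have $\log_2(P(n,d)/3) \leqslant (2d-3)\log_2 n + O(1)$. Therefore the parenthesized factor is at least
\begin{equation*}
(d-1)\log_2 n - \tfrac{1}{2}(2d-3)\log_2 n - O(1) = \tfrac{1}{2}\log_2 n - O(1),
\end{equation*}
which is $\Theta(\log n)$ and in particular positive once $n$ is large. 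Dividing by $10\Delta = \Theta(n^{d-1})$ gives $\alpha(\mathcal{G}_{n,d}) = \Omega\!\left(\frac{n!\log n}{n^{d-1}}\right)$, and since any independent set in $\mathcal{G}_{n,d}$ is an $(n,d)$-permutation code and conversely, this is exactly the claimed lower bound on $\mathcal{C}_B(n,d)$.

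Finally, to see that this improves the Gilbert--Varshamov type bound, recall from Lemma \ref{lem:GVSP} and Lemma \ref{lem:ballsize} that the previous lower bound is $\mathcal{C}_B(n,d) \geqslant n!/|b_B(n,d-1)| = \Theta(n!/n^{d-1})$ for fixed $d$. The new bound carries an extra factor of $\Theta(\log n)$, giving the stated $\Omega(\log n)$ improvement. I do not anticipate a genuine obstacle here: the substance of the argument is the combinatorial estimate $P(n,d) = O(n^{2d-3})$ proved in Lemma \ref{lem:2d-3}, and all that remains is bookkeeping with logarithms plus the remark on vertex-transitivity. The only point deserving a moment's care is that the coefficient $d-1$ of $\log_2 n$ coming from $\Delta$ strictly exceeds the coefficient $\frac{1}{2}(2d-3) = d - \frac{3}{2}$ coming from the $\frac{1}{2}\log_2 P$ term, so the bracket does not collapse to $O(1)$; this gap is exactly $\frac{1}{2}$, independent of $d$, which is why the improvement is $\Omega(\log n)$ rather than something larger or degenerate.
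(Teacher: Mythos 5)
Your proposal is correct and follows essentially the same route as the paper: invoke Lemma \ref{lem:JVsparse} with $|V(\mathcal{G}_{n,d})|=n!$, $\Delta=\Theta(n^{d-1})$ and $P(n,d)=O(n^{2d-3})$, and observe that the coefficient gap $(d-1)-(d-\tfrac{3}{2})=\tfrac{1}{2}$ leaves a net $\Theta(\log n)$ in the bracket, which becomes the $\Omega(\log n)$ gain over the Gilbert--Varshamov bound. Your explicit remark that left-invariance makes $\mathcal{G}_{n,d}$ vertex-transitive, so that bounding the edges in the identity's neighborhood suffices for the hypothesis of Lemma \ref{lem:JVsparse}, is a detail the paper leaves implicit, but otherwise the two arguments coincide.
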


\begin{proof}
  Using our graph notation, the Gilbert-Varshamov bound is
\begin{equation*}
A_{GV}(n,d):=\frac{n!}{1+\Delta(n,d)}=\Theta(\frac{n!}{n^{d-1}}).
\end{equation*}
By Lemma \ref{lem:JVsparse} and Lemma \ref{lem:2d-3}, we have
\begin{flalign}
\frac{\alpha(\mathcal{G}_{n,d})}{A_{GV}(n,d)}  & \geqslant
\frac{\frac{n!}{10\Delta(n,d)}(\log_{2}\Delta(n,d)-\frac{1}{2}\log_{2}(\frac{P(n,d)}{3}))}{\frac{n!}{1+\Delta(n,d)}} \nonumber \\
\geqslant & \frac{1}{10} \log_{2}(\frac{\Delta(n,d)}{\sqrt{\frac{P(n,d)}{3}}}) \geqslant \frac{1}{10} \log_{2}(\frac{c_{b}n^{d-1}}{c_{s}n^{d-\frac{3}{2}}})
=c\log(n).  \nonumber
\end{flalign}

Hence we have

$$\frac{\alpha(\mathcal{G}_{n,d})}{A_{GV}(n,d)}=\Omega(\log(n)). $$\\

where $c_{b}$, $c_{s}$ and $c$ are constants independent of $n$.
\end{proof}

\section{Construction} \label{sec:encoding}

In this section, we propose a new construction of permutation codes under block permutation metric. The main idea arises from constructing constant weight binary codes under Hamming metric.

Recall that $\mathcal{P}_n=\{(x,y):x\neq y, x,y\in[n]\}$ and $|\mathcal{P}_n|=n(n-1)$. Suppose $q\geqslant n(n-1)/2$ is a prime number.
From \emph{Bertrand's postulate}, there is always such a $q$, $n(n-1)/2\leqslant q \leqslant n(n-1)$.

Let $\mathcal{V}:\mathcal{P}\rightarrow \mathbb{F}_q$ be a map from $\mathcal{P}$ to the finite field $\mathbb{F}_q$ such that for distinct pairs $(x,y)$ and $(x',y')$, $\mathcal{V}(x,y)=\mathcal{V}(x',y')$ if and only if
$x'=y$ and $y'=x$. The range of $\mathcal{V}$ has size $n(n-1)/2$ and can be satisfied since we set $q\geqslant n(n-1)/2$.

Then for any permutation $\pi\in S_n$, $\mathcal{V}$ maps its characteristic set $A(\pi)=\{(\pi(i),\pi(i+1))\mid 1\leqslant i < n\}$ into $\{\mathcal{V}((\pi(i),\pi(i+1))\mid 1\leqslant i < n\}$, which is a subset of $\mathbb{F}_q$ of cardinality $n-1$. Denote these $n-1$ elements as $\gamma_{1},\gamma_{2},\dots,\gamma_{n-1}$.

We then define a map $F$ from $S_n$ to $\mathbb{F}_{q}^{d-1}$ as follows:
\begin{equation*}
  F(\pi)=(F_{1}(\pi),F_{2}(\pi),...,F_{d-1}(\pi)),
\end{equation*}
 where
\begin{flalign*}\label{map:F}
    F_{1}(\pi)&=\sum_{1\leqslant i \leqslant n-1}\gamma_{i},  \\
    F_{2}(\pi)&=\sum_{1\leqslant i < j \leqslant n-1}\gamma_{i}\gamma_{j},  \\
    F_{3}(\pi)&=\sum_{1\leqslant i < j < k \leqslant n-1}\gamma_{i}\gamma_{j}\gamma_{k}, \\
              & ...\nonumber
\end{flalign*}

\begin{theorem}
For any two distinct permutations $\pi,\sigma \in S_n$, if $F(\pi)=F(\sigma)$, then $d_B(\pi,\sigma)\geqslant d$.
\end{theorem}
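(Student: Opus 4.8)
The plan is to repackage the vector $F(\pi)$ as the top few coefficients of a univariate polynomial over $\mathbb{F}_q$ and then run a gcd/degree argument. First, for any $\tau\in S_n$ the map $\mathcal{V}$ is injective on $A(\tau)$: if $(x,y)$ and $(y,x)$ both lay in $A(\tau)$, then, since each symbol occurs once in $\tau$, the positions of $x$ and $y$ would have to satisfy $i+1=j$ and $j+1=i$ simultaneously, which is impossible. Hence $\Gamma(\tau):=\mathcal{V}(A(\tau))=\{\gamma_1,\dots,\gamma_{n-1}\}$ is a set of $n-1$ \emph{distinct} elements of $\mathbb{F}_q$, and $F_k(\tau)=e_k(\Gamma(\tau))$ is the $k$-th elementary symmetric polynomial of that set. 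Encode $\Gamma(\tau)$ by the monic polynomial
\begin{equation*}
P_\tau(X)=\prod_{\gamma\in\Gamma(\tau)}(X-\gamma)=\sum_{k=0}^{n-1}(-1)^k e_k(\Gamma(\tau))\,X^{n-1-k}\in\mathbb{F}_q[X].
\end{equation*}
Then $F(\pi)=F(\sigma)$ says exactly that $P_\pi$ and $P_\sigma$ share the coefficients of $X^{n-1},X^{n-2},\dots,X^{n-d}$, i.e. $\deg(P_\pi-P_\sigma)\leqslant n-1-d$.

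Next I would compare the two relevant symmetric differences. Put $t=d_B(\pi,\sigma)=|A(\pi)\setminus A(\sigma)|$ and $s=|\Gamma(\pi)\setminus\Gamma(\sigma)|=|\Gamma(\sigma)\setminus\Gamma(\pi)|$ (equal since $|\Gamma(\pi)|=|\Gamma(\sigma)|=n-1$). Every $\gamma\in\Gamma(\pi)\setminus\Gamma(\sigma)$ equals $\mathcal{V}(x,y)$ for a unique $(x,y)\in A(\pi)$, and necessarily $(x,y)\notin A(\sigma)$ (else $\gamma\in\Gamma(\sigma)$); as $\mathcal{V}$ is injective on $A(\pi)$, this gives an injection $\Gamma(\pi)\setminus\Gamma(\sigma)\hookrightarrow A(\pi)\setminus A(\sigma)$, so $s\leqslant t$. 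Now factor out the common roots: with $Q(X)=\prod_{\gamma\in\Gamma(\pi)\cap\Gamma(\sigma)}(X-\gamma)$ of degree $n-1-s$, write $P_\pi=Q\,g$ and $P_\sigma=Q\,h$, where $g,h$ are monic of degree $s$ with the disjoint root sets $\Gamma(\pi)\setminus\Gamma(\sigma)$ and $\Gamma(\sigma)\setminus\Gamma(\pi)$. Then $P_\pi-P_\sigma=Q(g-h)$. If $g\neq h$, then since $\mathbb{F}_q[X]$ is a domain $\deg(P_\pi-P_\sigma)=(n-1-s)+\deg(g-h)\geqslant n-1-s$, and comparing with $\deg(P_\pi-P_\sigma)\leqslant n-1-d$ forces $s\geqslant d$, whence $t\geqslant s\geqslant d$ and we are done.

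It remains to treat the case $g=h$, i.e. $\Gamma(\pi)=\Gamma(\sigma)$, and this is the step that needs some care. Equality of $\mathcal{V}$-images only tells us that $\pi$ and $\sigma$ trace out the \emph{same undirected} Hamiltonian path on $[n]$, which has exactly two directed traversals; since $\pi\neq\sigma$, $\sigma$ must be the reversal of $\pi$. But then $A(\pi)\cap A(\sigma)=\varnothing$ (again, no pair lies in a characteristic set together with its reverse), so $d_B(\pi,\sigma)=|A(\pi)|=n-1\geqslant d$ (recall $d\leqslant n-1$). In all cases $d_B(\pi,\sigma)\geqslant d$. I expect the main obstacle to be exactly this last case: one must notice that $F$ is \emph{not} injective, that its only nontrivial collisions are the reversal pairs, and that such pairs automatically sit at distance $n-1$; the polynomial-degree estimate is otherwise routine.
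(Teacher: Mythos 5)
Your proof is correct and is essentially the paper's argument in contrapositive form: the paper likewise identifies $F_k$ with the elementary symmetric polynomials of $\mathcal{V}(A(\pi))$, cancels the contribution of the common part $A(\pi)\cap A(\sigma)$, concludes that the $\mathcal{V}$-images of $A(\pi)\setminus A(\sigma)$ and $A(\sigma)\setminus A(\pi)$ coincide as the root set of a single polynomial of degree $\delta=d_B(\pi,\sigma)$, and then argues that the shared directed edges force a unique orientation of the underlying undirected Hamiltonian path, giving $\pi=\sigma$. Your explicit handling of the case $\Gamma(\pi)=\Gamma(\sigma)$ is the one place you are more careful than the paper: its assertion that the orientation of the remaining edges is \emph{unique} silently relies on $A(\pi)\cap A(\sigma)\neq\varnothing$, which holds there because $\delta<d\leqslant n-1$, and your reversal-pair analysis makes the same point transparent.
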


\begin{proof}
Suppose on the contrary that there exist two distinct permutations $\pi,\sigma \in S_n$ such that $F(\pi)=F(\sigma)$ and $d_B(\pi,\sigma)=\delta<d$.
Recall that $d_B(\pi,\sigma)=|A(\pi)\setminus A(\sigma)|= |A(\sigma)\setminus A(\pi)|$.
Therefore $\mathcal{V}$ maps the set $A(\pi)\setminus A(\sigma)$ into a subset $\{\alpha_{1},\alpha_{2},\dots,\alpha_{\delta}\}$ and similarly
$\mathcal{V}$ maps the set $A(\sigma)\setminus A(\pi)$ into a subset $\{\beta_{1},\beta_{2},\dots,\beta_{\delta}\}$.

The condition $F(\pi)=F(\sigma)$ will infer the following equations.
 \begin{flalign*}
   \zeta_{1}   &=\sum_{1\leqslant i \leqslant \delta}\alpha_{i}=\sum_{1\leqslant i \leqslant \delta}\beta_{i},   \\
   \zeta_{2}   &=\sum_{1\leqslant i < j \leqslant \delta}\alpha_{i}\alpha_{j}=\sum_{1\leqslant i < j \leqslant \delta}\beta_{i}\beta_{j}, \\
               & \ldots                                               \\
   \zeta_{d-1} &=\sum_{i_{1}<\ldots<i_{d-1}}\alpha_{i_{1}}\ldots\alpha_{i_{d-1}}=\sum_{i_{1}<\ldots<i_{d-1}}\beta_{i_{1}}\ldots\beta_{i_{d-1}}.
 \end{flalign*}

Consider the polynomial $x^{\delta}-\zeta_{1}x^{\delta -1}+\zeta_{2}x^{\delta -2}-\cdots +(-1)^{\delta +1}\zeta_{\delta}=\prod_{1\leqslant i \leqslant \delta}(x-\alpha_i)=\prod_{1\leqslant i \leqslant \delta}(x-\beta_i)$.
Then $\{\alpha_{1},\alpha_{2},\dots,\alpha_{\delta}\}$ and $\{\beta_{1},\beta_{2},\dots,\beta_{\delta}\}$ are both the zeros of this polynomial and thus these two sets are identical.

Consider the complete directed graph with vertex set $[n]$ where each permutation corresponds to a directed Hamiltonian path indicated by its characteristic set. Now the path indicating $\pi$ and the path indicating $\sigma$ share $n-1-\delta$ directed edges in $A(\pi)\cap A(\sigma)$. Due to the property of the map $\mathcal{V}$, the set $\mathcal{E}$ of edges (without considering directions at this moment) corresponding to the pairs $\{\alpha_{1},\alpha_{2},\dots,\alpha_{\delta}\}=\{\beta_{1},\beta_{2},\dots,\beta_{\delta}\}$ are uniquely determined. With the given directions on the edges $A(\pi)\cap A(\sigma)$, there is a unique way to choose the directions for the edges in $\mathcal{E}$ to obtain a Hamiltonian path. Therefore $\pi$ should be the same as $\sigma$, a contradiction.
\end{proof}

Therefore, we can construct $(n,d)$-permutation codes under block permutation metric as follows.
\begin{theorem}\label{thm:constructcode}
  For every $\mathbf{f}\in\mathbb{F}_{q}^{d-1}$, $C_{\mathbf{f}}(n,d)=\{\pi|\pi\in S_{n}, F(\pi)=\mathbf{f}\}$ is an $(n,d)$-permutation code under block permutation metric.
\end{theorem}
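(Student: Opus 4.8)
The plan is to obtain this as an immediate corollary of the preceding theorem, which is where all the substance lies; the remaining statement is only a matter of unwinding the definition of an $(n,d)$-permutation code. So the proof amounts to verifying the minimum-distance condition for the set $C_{\mathbf{f}}(n,d)$.

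Concretely, I would fix an arbitrary $\mathbf{f}\in\mathbb{F}_q^{d-1}$ and take any two \emph{distinct} permutations $\pi,\sigma\in C_{\mathbf{f}}(n,d)$. By the defining property of $C_{\mathbf{f}}(n,d)$ we have $F(\pi)=\mathbf{f}$ and $F(\sigma)=\mathbf{f}$, hence $F(\pi)=F(\sigma)$. The preceding theorem then applies verbatim to the pair $(\pi,\sigma)$: since $\pi\neq\sigma$ and $F(\pi)=F(\sigma)$, it yields $d_B(\pi,\sigma)\geqslant d$. As the pair was arbitrary, every two distinct elements of $C_{\mathbf{f}}(n,d)$ are at block permutation distance at least $d$, which is exactly the definition of an $(n,d)$-permutation code under the block permutation metric.

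Two minor bookkeeping points should be noted in passing. First, the construction is well defined: the map $\mathcal{V}\colon\mathcal{P}_n\to\mathbb{F}_q$ with the prescribed collision pattern exists precisely because $q\geqslant n(n-1)/2$ (supplied by Bertrand's postulate), the restriction of $\mathcal{V}$ to any characteristic set $A(\pi)$ is injective since $A(\pi)$ cannot contain both $(x,y)$ and $(y,x)$, and the maps $F_1,\dots,F_{d-1}$ are elementary symmetric functions of the resulting $n-1$ field elements, hence independent of any ordering; thus $F\colon S_n\to\mathbb{F}_q^{d-1}$ is a genuine function and the fibers $C_{\mathbf{f}}(n,d)$ partition $S_n$. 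Second, the degenerate cases $|C_{\mathbf{f}}(n,d)|\in\{0,1\}$ satisfy the distance condition vacuously, so no nonemptiness hypothesis is required.

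I do not expect a real obstacle here, since the work is entirely carried out in the prior theorem. The one thing worth recording alongside the statement is the size estimate: because $F$ partitions the $n!$ permutations of $S_n$ into at most $q^{d-1}\leqslant\bigl(n(n-1)\bigr)^{d-1}$ classes, an averaging argument produces some $\mathbf{f}$ with $|C_{\mathbf{f}}(n,d)|\geqslant n!/q^{d-1}=\Omega\bigl(n!/n^{2d-2}\bigr)$, which is larger than the earlier construction of size $n!/q^{2d-3}$ by a factor of $\Theta(n^{2d-4})$. I would phrase this as a corollary immediately after the theorem.
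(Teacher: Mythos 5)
Your proof is correct and matches the paper's (implicit) argument exactly: the paper derives this theorem as an immediate consequence of the preceding theorem, by applying it to any two distinct permutations in the same fiber of $F$. Your additional remarks on well-definedness of $F$ and the averaging bound $|C_{\mathbf{f}_{\max}}(n,d)|\geqslant n!/q^{d-1}$ are also consistent with the discussion the paper gives right after the theorem.
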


Consider all the vectors $\mathbf{f}\in\mathbb{F}_{q}^{d-1}$ and then $\{C_{\mathbf{f}}(n,d): \textbf{f}\in \mathbb{F}_{q}^{d-1}\}$ is a partition of $S_{n}$, where each
component $C_{\emph{\textbf{f}}}(n,d)$ is a permutation code under block permutation metric. Suppose $C_{\mathbf{f}_{\max}}(n,d)$ is the one with
maximal size, then by pigeonhole principle, we obtain that $|C_{\mathbf{f}_{\max}}(n,d)| \geqslant \frac{n!}{|\mathbb{F}_{q}^{d-1}|} =
\frac{n!}{q^{d-1}} = \frac{n!}{n^{2d-2}}.$

In \cite{Yang2018Theoretical}, Yang et al. constructed a permutation code of size $\frac{n!}{q^{2d-3}}=\frac{n!}{n^{4d-6}}$, where $q$ is a prime number
such that $n(n-1) \leqslant q \leqslant 2n(n-1)$. So our construction improves the size of permutation codes by a factor of $\Theta(n^{2d-4})$.

\section{An upper bound} \label{sec:UpperBound}

In this section, we obtain a new upper bound by means of analyzing the characteristic sets of the codewords.
Recall that for each permutation $\pi\in S_n$, its characteristic set $A(\pi)=\{ (\pi(i),\pi(i+1))|1\leqslant i <n \}$ is a subset of
$\mathcal{P}_{n}$ of cardinality $|A(\pi)|=n-1$.
Denote $I(\pi_{1},\pi_{2})=|A(\pi_{1})\cap A(\pi_{2})|$, then we have
\begin{lemma}\label{lem:intersection}
For any $\pi_{1},\pi_{2}\in S_n$, $d_{B}(\pi_{1},\pi_{2})\geqslant d$ if and only if $I(\pi_{1},\pi_{2}) \leqslant n-d-1$.
\end{lemma}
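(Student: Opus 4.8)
The plan is to derive this directly from Lemma \ref{lem:db}, which already expresses the block permutation distance as $d_{B}(\pi_{1},\pi_{2}) = |A(\pi_{1}) \setminus A(\pi_{2})|$. The only additional ingredient needed is the observation, recalled just before the statement, that the characteristic set of every permutation in $S_n$ has cardinality exactly $n-1$, so $|A(\pi_{1})| = n-1$.

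First I would write the set-difference in terms of the intersection: since $A(\pi_{1}) \setminus A(\pi_{2})$ and $A(\pi_{1}) \cap A(\pi_{2})$ partition $A(\pi_{1})$, we have
\begin{equation*}
d_{B}(\pi_{1},\pi_{2}) = |A(\pi_{1}) \setminus A(\pi_{2})| = |A(\pi_{1})| - |A(\pi_{1}) \cap A(\pi_{2})| = (n-1) - I(\pi_{1},\pi_{2}).
\end{equation*}
Then the equivalence follows by a one-line rearrangement: $d_{B}(\pi_{1},\pi_{2}) \geqslant d$ holds if and only if $(n-1) - I(\pi_{1},\pi_{2}) \geqslant d$, which is equivalent to $I(\pi_{1},\pi_{2}) \leqslant n-1-d = n-d-1$.

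There is essentially no obstacle here; the statement is a cosmetic reformulation of Lemma \ref{lem:db} using the fixed size of characteristic sets, and the whole argument fits in two or three lines. The only point worth stating explicitly (so the reader is not left to wonder) is why $|A(\pi_{1})| = n-1$, namely that $A(\pi_{1})$ consists of the $n-1$ consecutive pairs $(\pi_{1}(i), \pi_{1}(i+1))$ for $1 \leqslant i < n$, all of which are distinct because $\pi_{1}$ is a permutation.
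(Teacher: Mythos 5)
Your proof is correct and matches the (implicit) argument the paper relies on: the lemma is stated there without proof as an immediate consequence of Lemma \ref{lem:db} together with $|A(\pi_1)|=n-1$, which is exactly the two-line computation you give.
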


Given an $(n,d)$-permutation code $\mathcal{C}$, let $\mathcal{F}$ be the collection of all the characteristic sets $A(\pi)$ of the codewords, i.e.,
$\mathcal{F}=\{ A(\pi)|\pi\in\mathcal{C} \}$. We translate the problem of analyzing the bound of codes into the following extremal set theory problem:
find the maximal size of a family $\mathcal{F}$ of $(n-1)$-subsets of $\mathcal{P}_{n}$ satisfying that the intersection of each pair of subsets is at most $n-d-1$.
Then we can obtain an upper bound of a new type as follows.

\begin{theorem}\label{thm:upperbound}
    For given integers $n$ and $d$,
    \begin{equation*}
      |\mathcal{F}|\leqslant \frac{\binom{n}{d}\binom{n}{d}(n-d)!}{\binom{n-1}{n-d}}.
    \end{equation*}
\end{theorem}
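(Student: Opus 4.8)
The plan is to translate the condition defining $\mathcal{F}$ into a statement about $(n-d)$-element subsets and then run a double-counting argument, bounding the number of ``admissible'' such subsets by relaxing a path condition to a pure in/out-degree condition.

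The first ingredient I would establish is a disjointness statement: \emph{every $(n-d)$-subset $S\subseteq\mathcal{P}_n$ is contained in at most one member of $\mathcal{F}$}. Indeed, if $S\subseteq A(\pi_1)$ and $S\subseteq A(\pi_2)$ for distinct codewords $\pi_1,\pi_2\in\mathcal{C}$, then $I(\pi_1,\pi_2)=|A(\pi_1)\cap A(\pi_2)|\geqslant|S|=n-d>n-d-1$, so Lemma~\ref{lem:intersection} forces $d_B(\pi_1,\pi_2)<d$, contradicting that $\mathcal{C}$ is an $(n,d)$-code.

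Next I would count, in two ways, the pairs $(A,S)$ with $A\in\mathcal{F}$, $S\subseteq A$ and $|S|=n-d$. Summing over $A$ first, each characteristic set has size $n-1$ and hence exactly $\binom{n-1}{n-d}$ subsets of size $n-d$, giving $|\mathcal{F}|\binom{n-1}{n-d}$ pairs in total. Summing over $S$ instead, each $S$ lies in at most one $A$ by the disjointness statement, and $S$ can occur at all only if it is a subset of some characteristic set, i.e. only if the directed graph on vertex set $[n]$ with edge set $S$ is a vertex-disjoint union of directed paths. Writing $M$ for the number of $(n-d)$-subsets of $\mathcal{P}_n$ of this last form, we obtain $|\mathcal{F}|\binom{n-1}{n-d}\leqslant M$.

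It remains to bound $M$. A set $S$ of $n-d$ directed edges on $[n]$ that is a vertex-disjoint union of directed paths has $n-d$ distinct tails, which form a set $D$ of size $n-d$, and the assignment of each tail to the head of its edge is an injection $f\colon D\to[n]$ whose functional digraph is acyclic. Discarding the acyclicity requirement only enlarges the count, so $M$ is at most the number of pairs $(D,f)$ with $D\subseteq[n]$, $|D|=n-d$, and $f\colon D\to[n]$ injective; this equals $\binom{n}{n-d}\cdot\tfrac{n!}{d!}=\binom{n}{d}\binom{n}{d}(n-d)!$, using $\tfrac{n!}{d!}=\binom{n}{d}(n-d)!$. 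Dividing $|\mathcal{F}|\binom{n-1}{n-d}\leqslant M$ by $\binom{n-1}{n-d}$ then yields the theorem. The step I expect to require the most care is the translation of ``an $(n-d)$-subset contained in a characteristic set'' into the injective-partial-map picture, together with the observation that acyclicity may be dropped for the upper bound; the remainder is routine manipulation of binomial coefficients.
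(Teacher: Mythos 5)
Your proposal is correct and follows essentially the same route as the paper: the same disjointness observation via Lemma~\ref{lem:intersection}, the same double count of pairs $(A,S)$, and the same upper bound on the number of admissible $(n-d)$-subsets (your injective-partial-map count $\binom{n}{n-d}\cdot\frac{n!}{d!}$ is just a re-parametrization of the paper's ``choose $n-d$ rows, $n-d$ columns, and a permutation matrix'' count). No substantive differences to report.
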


\begin{proof}
  Let $T(n,d)$ be the family of all possible $(n-d)$-subsets of some $A(\pi)$, $\pi\in S_n$. Each $A(\pi)\in\mathcal{F}$ contains $\binom{n-1}{n-d}$
  such subsets. By Lemma \ref{lem:intersection}, any $(n-d)$-subset in $T(n,d)$ is contained in the characteristic set of at most one codeword.
  Therefore $|\mathcal{F}|\binom{n-1}{n-d}\leqslant |T(n,d)|$.

  The remaining problem is to estimate $|T(n,d)|$. For each set $A\in T(n,d)$, consider the $n\times n$ matrix $M=(m_{i,j})$ where
\begin{equation*}
    m_{i,j}=
  \begin{cases}
    1, & \mbox{if pair } (i,j)\in A,  \\
    0, & \mbox{otherwise}.
  \end{cases}
\end{equation*}
Since $A$ is an $(n-d)$-subset of some $A(\pi)$, $\pi\in S_n$, then the matrix should contain exactly $n-d$ entries of `1' and the weight of each column and row is at most 1.
Then the number of distinct $A$ is upper bounded by the number of ways to select $n-d$ rows and $n-d$ columns and construct a permutation matrix from the chosen sub-matrix.
Hence $T(n,d)\leqslant \binom{n}{n-d}\binom{n}{n-d}(n-d)!$. Therefore we have
$|\mathcal{F}|\leqslant\frac{\binom{n}{d}\binom{n}{d}(n-d)!}{\binom{n-1}{n-d}}$.
\end{proof}

By Lemma \ref{lem:ballsize} and Lemma \ref{lem:GVSP}, if $t \leqslant n-\sqrt{n}-1$, denote the sphere-packing bound as $A_{SP}(n,2t+1)$, which falls in the
range
$$\frac{n!}{\prod\limits_{i=0}^{t}(n-i)}\leqslant A_{SP}(n,2t+1) \leqslant \frac{n!}{\prod\limits_{i=1}^{t}(n-i)}.$$

Denote our new type upper bound $\frac{\binom{n}{d}\binom{n}{d}(n-d)!}{\binom{n-1}{n-d}}$ as $A_{new}(n,d)$.

\begin{corollary}
Given $n$ and $d=2t+1$, if $t\leqslant n-\sqrt{n}-1$, $n\cdot\prod\limits_{i=0}^{t}\left(n-i\right)\leqslant d\cdot d!$ and $d\leqslant n-1$, then
$A_{new}(n,d)\leqslant A_{SP}(n,d)$.
\end{corollary}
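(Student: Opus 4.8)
The plan is to compare the two quantities $A_{new}(n,d)=\dfrac{\binom{n}{d}\binom{n}{d}(n-d)!}{\binom{n-1}{n-d}}$ and the upper end of the sphere-packing range, namely $\dfrac{n!}{\prod_{i=1}^{t}(n-i)}$, and show under the stated hypotheses that $A_{new}(n,d)$ is no larger than the \emph{lower} end $\dfrac{n!}{\prod_{i=0}^{t}(n-i)}$, which certainly suffices since $A_{SP}(n,d)\geqslant \dfrac{n!}{\prod_{i=0}^{t}(n-i)}$. So it is enough to prove
\begin{equation*}
\frac{\binom{n}{d}\binom{n}{d}(n-d)!}{\binom{n-1}{n-d}}\;\leqslant\;\frac{n!}{\prod_{i=0}^{t}(n-i)}.
\end{equation*}

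First I would rewrite everything in terms of factorials. Using $\binom{n}{d}=\dfrac{n!}{d!(n-d)!}$ and $\binom{n-1}{n-d}=\binom{n-1}{d-1}=\dfrac{(n-1)!}{(d-1)!(n-d)!}$, the left-hand side becomes
\begin{equation*}
A_{new}(n,d)=\frac{\left(\frac{n!}{d!(n-d)!}\right)^2 (n-d)!\,(d-1)!(n-d)!}{(n-1)!}
=\frac{(n!)^2(d-1)!}{(d!)^2\,(n-1)!}
=\frac{n\cdot n!}{d\cdot d!}.
\end{equation*}
Thus the clean reformulation of the claim is simply $\dfrac{n\cdot n!}{d\cdot d!}\leqslant \dfrac{n!}{\prod_{i=0}^{t}(n-i)}$, i.e.
\begin{equation*}
n\cdot\prod_{i=0}^{t}(n-i)\;\leqslant\;d\cdot d!,
\end{equation*}
which is exactly the second hypothesis of the corollary. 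The hypotheses $t\leqslant n-\sqrt{n}-1$ and $d\leqslant n-1$ are needed only to guarantee that Lemma~\ref{lem:ballsize} applies (so that $A_{SP}(n,d)$ indeed lies in the quoted range) and that the quantities involved are the legitimate code parameters; they play no further algebraic role.

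Putting the pieces together: by Theorem~\ref{thm:upperbound} and the identification of $\mathcal{F}$ with the characteristic sets of an optimal code, $\mathcal{C}_B(n,d)\leqslant A_{new}(n,d)=\frac{n\cdot n!}{d\cdot d!}$; by the hypothesis $n\cdot\prod_{i=0}^{t}(n-i)\leqslant d\cdot d!$ this is $\leqslant \frac{n!}{\prod_{i=0}^{t}(n-i)}\leqslant A_{SP}(n,d)$, using Lemma~\ref{lem:ballsize} and Lemma~\ref{lem:GVSP} for the last inequality (valid since $t\leqslant n-\sqrt{n}-1$). I do not anticipate a genuine obstacle here: the only mild subtlety is carrying out the factorial simplification of $A_{new}(n,d)$ carefully — in particular noticing the cancellation that collapses it to the compact form $\frac{n\cdot n!}{d\cdot d!}$ — after which the corollary is immediate from the assumptions.
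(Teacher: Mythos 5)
Your proof is correct and is clearly the intended argument (the paper states this corollary without proof): the factorial simplification $A_{new}(n,d)=\frac{n\cdot n!}{d\cdot d!}$ is right, and it turns the hypothesis $n\cdot\prod_{i=0}^{t}(n-i)\leqslant d\cdot d!$ into exactly the inequality $A_{new}(n,d)\leqslant \frac{n!}{\prod_{i=0}^{t}(n-i)}\leqslant A_{SP}(n,d)$, with $t\leqslant n-\sqrt{n}-1$ justifying the quoted range for $A_{SP}$ and $d\leqslant n-1$ keeping the binomial coefficients meaningful.
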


In Table \ref{table:UP} we list several cases for small parameters as supporting evidences to show that the new bound in Theorem \ref{thm:upperbound} works better than
sphere-packing bound when $d$ is relatively close to $n$. Note that the values of sphere-packing bound in this table say that the size of codes is upper bounded by some value $x$, where $x$ is not less than the values shown in the table. (For example, the size of a (13,9)-code is upper bounded by $x$, where $x\geqslant 40320$. It doesn't necessarily suggest that the size of a (13,9)-code is upper bounded by $40320$. Our new result indicates that the size of a (13,9)-code is upper bounded by 24787, which is indeed an improvement over the sphere-packing bound.)

\begin{table}[!h]\label{table:bound}
  \centering
  \caption{A comparison of new bound and sphere-packing bound with some small parameters} \label{table:UP}

  \begin{tabular}{cccc|cccc}
  \toprule
  n & d & Sphere-packing bound & Theorem \ref{thm:upperbound} &n & d & Sphere-packing bound & Theorem \ref{thm:upperbound}\\
  \midrule
   13 & 9 & $\geqslant$ 40320 & {\bfseries 24787} &18 & 11 & $\geqslant$ 479001600 & {\bfseries 262461363} \\
   15 & 11 & $\geqslant$ 362880 & {\bfseries 44672} &18 & 13 & $\geqslant$ 39916800 & {\bfseries 1423607}\\
   16 & 11 & $\geqslant$ 3628800 & {\bfseries 762415} & 19 & 11 & $\geqslant 6227020800$  & {\bfseries 5263805324}\\
   17 & 11 & $\geqslant$ 39916800 & {\bfseries 13771113} &19 & 13 & $\geqslant$ 479001600 & {\bfseries 28551213}\\
   17 & 13 & $\geqslant$ 3628800 & {\bfseries 74696} & 20 & 13 & $\geqslant 6227020800$  & {\bfseries 601078154}\\

   \bottomrule
  \end{tabular}
\end{table}

\section{Conclusion}\label{sec:conclusion}

In this paper, we establish the correspondence between permutation codes and the independent sets of block permutation graphs. Using this approach,
we improve the Gilbert-Varshamov bound asymptotically by a factor of $\Omega(\log n)$ when the minimum distance $d$ is fixed and $n$ goes into infinity.
As for the upper bound, we clarify the relationship between block permutation distance of permutations and the intersection of their characteristic
sets. Using some counting methods, we derive an upper bound of a new type, which beats the sphere-packing bound when $d$ is relatively close to $n$.
Moreover, we present the existence of a permutation code which improves the size of the known result by a factor of $\Theta(n^{2d-4})$.
Explicit encoding schemes achieving this size are considered for future research.

\bibliographystyle{plain}
\bibliography{ivanref}
\end{document}